\newtheorem{theorem}{Theorem}
\newtheorem{definition}{Definition}
\newtheorem{lemma}{Lemma}
\newtheorem{remark}[theorem]{Remark}
\newtheorem{corollary}[theorem]{Corollary}
\theoremstyle{definition}
\theoremstyle{remark}
\newcommand{\bra}[1]{{\left\langle{#1}\right\vert}}
\newcommand{\ket}[1]{{\left\vert{#1}\right\rangle}}
\newcommand{\qw}[1][-1]{\ar @{-} [0,#1]}
\newcommand{\qwx}[1][-1]{\ar @{-} [#1,0]}
\newcommand{\gate}[1]{*+<.6em>{#1} \POS ="i","i"+UR;"i"+UL **\dir{-};"i"+DL **\dir{-};"i"+DR **\dir{-};"i"+UR **\dir{-},"i" \qw}
\newcommand{\control}{*!<0em,.025em>-=-<.2em>{\bullet}}
\newcommand{\ctrl}[1]{\control \qwx[#1] \qw}
\newcommand{\rstick}[1]{*!L!<-.5em,0em>=<0em>{#1}}
\newcommand{\lstick}[1]{*!R!<.5em,0em>=<0em>{#1}}
\newcommand{\Qcircuit}{\xymatrix @*=<0em>}
\newcommand\QMA{{\sf{QMA}}}
\newcommand\PSPACE{{\sf{PSPACE}}}
\newcommand\FPSPACE{{\sf{FPSPACE}}}
\newcommand\PP{\sf{PP}}
\newcommand\NP{{\sf{NP}}}
\newcommand\BQP{{\sf{BQP}}}
\newcommand\PostBQP{{\sf{PostBQP}}}
\newcommand\PQP{{\sf{PQP}}}
\newcommand\QMAexp{{\sf{QMA}_{exp}}}
\newcommand\bddQMA[5]{{\left(#1,#2\right)}\textit{-bounded }\QMA_{#3}(#4,#5)}
\newcommand\PQPSPACE{{\sf{PQPSPACE}}}
\newcommand\revPSPACE{{\sf{revPSPACE}}}
\newcommand\succdet{\textit{Succinct Determinant Checking}}
\newcommand\gappedsucc{\textit{Gapped Succinct Matrix Singularity}}
\newcommand\preciselh{\textit{Precise Local Hamiltonian}}
\newcommand\preciseklh{\textit{Precise }$k$-\textit{Local Hamiltonian}}
\newcommand\preciseilh[1]{\textit{Precise #1-Local Hamiltonian}}
\DeclareMathAlphabet{\matheu}{U}{eus}{m}{n}
\DeclareMathOperator{\tr}{tr}
\newcommand{\poly}{\mathrm{poly}}
\begin{document}

\title{Quantum Merlin Arthur with Exponentially Small Gap}
\author[1]{Bill Fefferman}
\author[1]{Cedric Lin}
\affil[1]{Joint Center for Quantum Information and Computer Science (QuICS), University of Maryland}
\date{\today}

\maketitle

\begin{abstract}
We will study the complexity of $\QMA$ proof systems with inverse exponentially small promise gap.  We will show that this class, $\QMAexp$, can be exactly characterized by $\PSPACE$, the class of problems solvable with a polynomial amount of memory.  As applications we show that a ``precise'' version of the Local Hamiltonian problem is $\PSPACE$-complete, and give a provable setting in which the ability to prepare PEPS states is not as powerful as the ability to prepare the ground state of general Local Hamiltonians.
\end{abstract}

\section{Introduction}
The class $\QMA$, \emph{Quantum Merlin-Arthur}, is the quantum analogue of $\NP$, and is one of the central objects of study in quantum complexity theory. $\QMA$ consists of those problems whose solutions can be verified with high probability using a quantum computer. This class was first shown to have a natural complete problem, the Local Hamiltonian problem, in \cite{ksv02}; since then many more  $\QMA$-complete problems have been discovered (see e.g., \cite{bookatz14}). There has also been much work in trying to prove a quantum version of the PCP theorem; see \cite{qpcpsurvey} for a review.

To be more precise, we give here the definition of $\QMA$:
\begin{definition}We say a promise problem $L=(L_{yes},L_{no})$ is in $\QMA(c,s)$ if there exists a uniform family of quantum circuits $\{ V_x\}_{x\in\{0,1\}^n}$, each of at most polynomial size, and acting on $k(|x|)+m(|x|)$ qubits for polynomials $k$ and $m$, so that:\\

If $x \in L_{yes}$ there exists an $m$-qubit state $\ket{\psi}$ such that:
\begin{equation}
\left(\bra{\psi}\otimes \bra{0^k}\right) V^\dagger_x \ket{1}\bra{1}_{out} V_x \left(\ket{\psi}\otimes \ket{0^k}\right) \ge c
\end{equation}
Whereas if $x \in L_{no}$, for all $m$-qubit states $\ket{\psi}$ we have:
\begin{equation}
\left(\bra{\psi}\otimes \bra{0^k}\right) V^\dagger_x \ket{1}\bra{1}_{out} V_x \left(\ket{\psi}\otimes \ket{0^k}\right) \le s.
\end{equation}
We call $c$ the completeness and $s$ the soundness parameters. Then $\QMA = \QMA(2/3,1/3)$.
  \end{definition}
  
It is natural to wonder whether the precise values of $c$ and $s$ matter. Kitaev showed \cite{ksv02} that as long as $c$ and $s$ are separated by at least an inverse polynomial, then by repeating the verification circuit polynomially many times, it is possible to amplify the promise gap $c-s$ to any constant less than one. Thus $\QMA(c,c-1/\poly) = \QMA$, and quantum Merlin Arthur proof protocols with only a polynomial gap is just as powerful as $\QMA$.

\subsection{Our contribution}

In this work we study the complexity of $\QMA$ protocols where the gap is only exponentially small, i.e. $c-s = \exp(-\poly)$. We show that in this case, the problems verifiable by these protocols exactly coincide with the problems solvable in classical polynomial space:

\begin{restatable}{theorem}{thmmain}
\label{thm:main}
$\QMAexp := \cup_{c-s > \exp(-\poly)}\QMA(c,s) = \PSPACE$.
\end{restatable}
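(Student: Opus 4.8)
The plan is to prove the two inclusions $\QMAexp \subseteq \PSPACE$ and $\PSPACE \subseteq \QMAexp$ separately.

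\medskip
\noindent\textbf{Upper bound.} Fix a verifier family $\{V_x\}$ with completeness and soundness $c,s$ satisfying $c-s > 2^{-p(|x|)}$ for some polynomial $p$. The maximum acceptance probability over all $m$-qubit witnesses is exactly $\lambda_{\max}(M_x)$, where
$M_x := (I_m\otimes\bra{0^k})\, V_x^\dagger\, (\ketbra{1}{1}_{out}\otimes I)\, V_x\, (I_m\otimes\ket{0^k})$
is a $2^m\times 2^m$ positive semidefinite matrix. Each entry of $M_x$ is a sum over the (exponentially many) computational paths of $V_x$ of products of polynomially many gate amplitudes, hence is a number with $\poly(|x|)$ bits that is computable in polynomial space. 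It therefore suffices to decide, given polynomial-space access to the entries of an exponentially large Hermitian positive semidefinite matrix with $\poly$-bit entries, whether $\lambda_{\max}\ge c$ or $\lambda_{\max}\le s$, knowing $c-s\ge 2^{-p}$. This is a classical linear-algebra task solvable in $\PSPACE$: one can compute the characteristic polynomial of $M_x$ (its coefficients are polynomial-space computable, e.g. via a polynomial-depth determinant circuit evaluated depth-first) and then locate its largest root to additive precision $2^{-p-1}$ by bisection. The delicate point is bit-precision bookkeeping: since the matrix has dimension $2^m$, exact intermediate computations can require exponentially many bits, so one must argue that truncating all intermediate quantities to polynomially many bits still recovers $\lambda_{\max}$ to accuracy $2^{-\poly}$.

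\medskip
\noindent\textbf{Lower bound.} Let $L\in\PSPACE$. By the Lange--McKenzie--Tapp theorem we may assume $L$ is decided by a reversible Turing machine running in space $p(n)$, hence in time $T=T(n)\le 2^{q(n)}$, applying at every step the \emph{same} reversible, polynomial-time-implementable transition $U$ to a configuration register of $O(p(n))$ bits. We apply the Feynman--Kitaev circuit-to-Hamiltonian construction to this exponentially long but time-homogeneous computation: the honest Merlin sends the history state $\ket{\eta}=T^{-1/2}\sum_{t=0}^{T-1}\ket{t}\ket{c_t}$, where $c_0$ is the initial configuration on input $x$ and $c_{t+1}=U c_t$; the clock register uses only $\lceil\log T\rceil+O(1)=\poly(n)$ qubits, so the witness length is polynomial. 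Arthur selects one of the $\poly(T)$ Hamiltonian terms (initialization, clock-legality, propagation, and an output term penalizing a non-accepting final configuration) uniformly at random and measures it; each term acts on polynomially many qubits and, crucially, is implementable by a polynomial-size circuit because $U$ and the clock-increment are, so Arthur runs in polynomial time. If $x\in L$ the computation accepts, $\langle\eta|H|\eta\rangle=0$, and Arthur accepts $\ket{\eta}$ with probability $1$. If $x\notin L$, Kitaev's analysis gives $\langle\psi|H|\psi\rangle\ge\gamma$ for all $\ket{\psi}$ with $\gamma=\Omega(1/\poly(T))$, so Arthur accepts with probability at most $1-\gamma/\poly(T)$. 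This is a $\QMA(1,\,1-2^{-\poly(n)})$ protocol, so $L\in\QMAexp$.

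\medskip
\noindent\textbf{Main obstacle.} The crux is on the lower-bound side, in reconciling two tensions: the simulated computation is exponentially long, yet (i) the witness and verifier must remain polynomial-size and polynomial-time, which is handled by the succinct (binary) clock together with time-homogeneity of the reversible simulation, so that Arthur ever only needs to implement the single transition $U$; and (ii) the promise gap must be controlled, which is handled by tracking that Kitaev's spectral-gap bound degrades only polynomially in $T$, hence only exponentially in $n$ — exactly the regime $\QMAexp$ permits. On the upper-bound side, the one genuinely nontrivial step is the precision analysis certifying that deciding the top eigenvalue of the succinctly specified matrix $M_x$ to inverse-exponential accuracy stays within polynomial space.
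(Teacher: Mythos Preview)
Your argument takes a genuinely different route from the paper on both inclusions.

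For the lower bound, the paper proceeds indirectly: it first shows that a \emph{Gapped Succinct Matrix Singularity} problem (deciding whether a succinctly encoded sparse PSD matrix has smallest eigenvalue zero or at least $2^{-\poly}$) is $\PSPACE$-hard, via the configuration graph of a reversible Turing machine together with a Chebyshev-polynomial computation of the spectrum of the resulting tridiagonal matrices; it then places that problem in $\QMAexp$ using sparse Hamiltonian simulation with polylogarithmic error dependence plus one-bit phase estimation. Your approach---Kitaev's history-state construction applied directly to the exponentially long, time-homogeneous reversible computation, with a binary clock---is more elementary: it needs neither the Chebyshev spectral analysis nor the Hamiltonian-simulation black box, only the observation that each of the $\Theta(T)$ Hamiltonian terms is measurable in $\poly(n)$ time because the transition $U$ is fixed and the clock comparison is on $\log T$ bits. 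This works (modulo routine padding so that $T$ is a power of two and the halted machine idles reversibly), and for Theorem~\ref{thm:main} alone it is arguably cleaner; the paper's detour buys an intermediate $\PSPACE$-complete problem of independent interest.

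For the upper bound, the paper also argues indirectly: it amplifies the gap in place via phase estimation to $(1-2^{-(m+2)},2^{-(m+2)})$ at the cost of exponential time but still polynomial space, then replaces the witness by the maximally mixed state to obtain a $\PQPSPACE$ computation, and invokes Watrous's $\PQPSPACE=\PSPACE$. Your plan---compute $\lambda_{\max}(M_x)$ directly in $\PSPACE$---is sound in principle, but the resolution you sketch for the precision issue is the wrong one. Truncating intermediate quantities to polynomially many bits is dangerous: the characteristic polynomial has degree $2^m$ and its coefficients (and values) are integers with up to $2^{\poly(n)}$ bits, so polynomial-bit truncation can obliterate the sign information bisection needs. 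The correct fix is not truncation but \emph{exact} bit-on-demand computation in polynomial space---for instance, compute each bit of $\tr(M_x^k)$ for a suitably large exponential $k$ via the polylogarithmic-depth arithmetic circuit for iterated integer matrix product, and compare against $((c+s)/2)^k$. That argument can be carried out, but it is precisely the nontrivial step you flagged and did not supply; the paper's amplify-then-mix route sidesteps it entirely.
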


The closest classical counterpart of $\QMAexp$ is $\NP^{\PP}$: given a classical witness, the verifier runs a classical computation that in the YES case accepts with probability at least $c$, or in the NO case accepts with probability at most $s$, where $c>s$. Note that in the classical case the inequality $c - s > \exp(-\poly)$ is always satisfied. Since $\NP^{\PP}$ is in the counting hierarchy, the entirety of which is contained in $\PSPACE$ (see e.g., \cite{allenderwagner}), we see that the quantum proof protocol is strictly stronger than the classical one, unless the counting hierarchy collapses to the second level.

Our proof of this theorem allows us to tweak the proof of $\QMA$-completeness of Local Hamiltonian \cite{ksv02,kr03} to show the following:

\begin{theorem}
\label{thm:lh}
For any $3 \le k \le \mathcal{O}(\log(n))$, determining whether the ground state energy of a $k$-local Hamiltonian is at most $a$ or at least $b$ for $b-a > \exp(-\poly)$ is $\PSPACE$-complete.\footnote{We make no attempt to decide whether $\PSPACE$-completeness still holds for $k = 2$.}
\end{theorem}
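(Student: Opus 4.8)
The plan is to establish both containments, in each case reusing Theorem~\ref{thm:main} and tracking how the promise gap behaves under Kitaev's circuit-to-Hamiltonian machinery. \emph{Membership in $\PSPACE$.} By Theorem~\ref{thm:main} it suffices to put \preciseklh{} into $\QMAexp$. Write the input Hamiltonian as $H = \sum_{i=1}^{m} H_i$ with $m = \poly(n)$; after a polynomial rescaling we may assume each $0 \preceq H_i \preceq \I$. The verifier receives a claimed ground state $\rho$ from Merlin, picks $i \in \{1,\dots,m\}$ uniformly, measures the two-outcome POVM $\{H_i,\, \I - H_i\}$ on the qubits in the support of $H_i$, and accepts on the second outcome. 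The acceptance probability is $1 - \tfrac{1}{m}\tr(H\rho)$, so in the YES case (ground energy at most $a$) some witness is accepted with probability at least $1 - a/m$, and in the NO case (ground energy at least $b$) every witness is accepted with probability at most $1 - b/m$. The gap is $(b-a)/m$, which is still $\exp(-\poly)$, so this is a legal $\QMAexp$ protocol. The one thing to verify is that $\{H_i,\, \I - H_i\}$ is efficiently implementable: since $H_i$ acts on $k \le \mathcal{O}(\log n)$ qubits it lives on a $2^k = \poly(n)$-dimensional space, so (its entries being given to $\poly(n)$ bits) it can be diagonalized and the measurement performed by a polynomial-size circuit.

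\emph{$\PSPACE$-hardness.} For the converse I would reduce an arbitrary language in $\QMAexp = \PSPACE$ to \preciseilh{3}, which is a special case of \preciseklh{} for every $k \ge 3$. Fix a $\QMA(c,s)$ verifier $V_x$ using $T = \poly(|x|)$ gates from a fixed universal set with $c - s > \exp(-\poly)$, and apply the standard history-state construction of \cite{ksv02,kr03} to obtain a $3$-local Hamiltonian $H_x = H_{\mathrm{in}} + H_{\mathrm{out}} + H_{\mathrm{prop}} + H_{\mathrm{clock}}$ on the witness, ancilla, and clock registers; this is computable in time $\poly(|x|)$ and its entries are rationals (or fixed algebraic numbers) of $\poly(|x|)$ bit size. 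One then re-runs Kitaev's analysis while keeping track of constants: combining the projection/geometric lemma of \cite{kr03} with the $\Omega(1/T^2)$ lower bound on the spectral gap of $H_{\mathrm{prop}}$ shows that the ground energy of $H_x$ is at most $a := a(c,T)$ in the YES case and at least $b := b(s,T)$ in the NO case, with $b - a = \Omega\!\left((c-s)/\poly(T)\right)$. Since $c-s = \exp(-\poly)$ and the loss is only polynomial, $b-a$ is still $\exp(-\poly)$, so $(H_x, a, b)$ is a legitimate instance of \preciseilh{3}. (If convenient one may first run $\poly(|x|)$ sequential repetitions of $V_x$ to push $c$ toward $1$ and $s$ toward $0$; this keeps $c-s$ exponentially small and does not change the argument.)

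\emph{Expected main obstacle.} The difficulty is not the reduction itself but confirming that the promise gap shrinks by at most a polynomial factor. In the textbook proof of $\QMA$-completeness of Local Hamiltonian one begins with a $1/\poly$ gap and settles for a smaller $1/\poly$ gap, so the precise dependence is usually left implicit; here I must check that each ingredient — the in/out penalty weights, the projection lemma, and the propagation spectral-gap estimate — costs only a $\poly(T)$ factor, so that an $\exp(-\poly)$ input gap maps to an $\exp(-\poly)$ (and not doubly-exponentially small) output gap. Granting this, the two containments together with Theorem~\ref{thm:main} give $\PSPACE$-completeness of \preciseklh{} for all $3 \le k \le \mathcal{O}(\log n)$.
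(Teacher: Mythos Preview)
Your membership argument is fine and matches the paper. The gap is in the hardness direction.

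The assertion $b-a=\Omega\!\big((c-s)/\poly(T)\big)$ is not what the Kitaev/Kempe--Regev analysis yields. What one actually gets is
\[
a \;\le\; \frac{1-c}{T+1}\,,\qquad b \;\ge\; \Omega\!\left(\frac{1-s}{T^{3}}\right),
\]
so
\[
b-a \;\ge\; \Omega\!\left(\frac{1-s}{T^{3}}\right)-\frac{1-c}{T+1}.
\]
This is \emph{not} controlled by $c-s$ alone; unless $1-c$ is already much smaller than $(1-s)/T^{2}$, the right-hand side is negative and you get no valid promise gap at all. (The projection-lemma route has the same issue: it gives $b-a\ge (c-s)/(T+1)-O(\lVert H_{\mathrm{out}}\rVert^{2}/J)$, and with $\lVert H_j\rVert\le\poly(n)$ the error term $O(1/J)\ge 1/\poly$ swamps an exponentially small $c-s$.) Your parenthetical fix---polynomially many repetitions to push $c\to 1$---does not help either: with an $\exp(-\poly)$ gap, standard amplification needs exponentially many rounds, which would make $T$ exponential and the Hamiltonian ill-formed.

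The paper sidesteps this by \emph{not} reducing from an arbitrary $\QMAexp$ verifier. It reduces from the specific $\PSPACE$-hard problem of Theorem~\ref{thm:gappedsucc} via the protocol of Lemma~\ref{lem:qmaexp protocol}, which has the special feature that $1-c=\epsilon$ is a free parameter and the gate count satisfies $T\le h(n,\log(1/\epsilon))$ for a polynomial $h$. Because $T$ grows only polylogarithmically in $1/\epsilon$, one can take $\epsilon$ small enough that $\epsilon(T^{2}+1)<2^{-g'(n)}\approx 1-s$, which forces $(1-s)/T^{3}>(1-c)/(T+1)$ with an $\exp(-\poly)$ margin. That freedom to drive $1-c$ down at essentially no cost in $T$ is the missing ingredient in your argument.
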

In contrast, when $b-a > 1/\poly$ the $k$-Local Hamiltonian problem is $\QMA$-complete \cite{ksv02,kr03} for $2 \le k \le \mathcal{O}(\log(n))$.

Moreover, since we can binary search for the ground state energy in classical polynomial space, this shows
\begin{corollary}
For any $3 \le k \le \mathcal{O}(\log(n))$, computing the ground state energy of a $k$-local Hamiltonian to polynomially many bits of precision is a $\FPSPACE$-complete problem.
\end{corollary}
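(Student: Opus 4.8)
The plan is to treat the corollary as the ``function version'' of Theorem~\ref{thm:lh}, via the routine equivalence between an optimization problem and its threshold decision problem: hardness will be essentially immediate, and membership is a binary search whose only delicate ingredient is that the comparison step must be decided \emph{exactly}, with no promise gap to lean on.

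For $\FPSPACE$-hardness, take an instance $(H,a,b)$ of the $\PSPACE$-complete promise problem of Theorem~\ref{thm:lh}, with $b-a>2^{-q(n)}$ for a polynomial $q$. Call the energy-computing function on $H$ at precision $q(n)+2$ and compare the returned value to the midpoint $(a+b)/2$; since that value is within $2^{-(q(n)+1)}<(b-a)/2$ of $\lambda_{\min}(H)$, the comparison (a logspace post-processing) decides the instance. Hence the function problem is $\PSPACE$-hard, which for a function class with polynomially bounded outputs is the relevant notion of $\FPSPACE$-hardness.

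For membership, note that under the standard normalization $H$ is a sum of $\poly(n)$ terms of operator norm at most $1$, so $\|H\|\le\poly(n)$ and $\lambda_{\min}(H)$ lies in an explicitly known interval $[-R,R]$ with $R=\poly(n)$; since $k=\mathcal{O}(\log n)$, $H$ is furthermore a succinctly-presented $2^n\times 2^n$ Hermitian matrix whose entries are $\poly(n)$-bit numbers computable from their indices in polynomial time. To output $t=\poly(n)$ bits of $\lambda_{\min}(H)$ we run ordinary binary search on $[-R,R]$ for $\mathcal{O}(\log R+t)=\poly(n)$ rounds, maintaining a $\poly(n)$-bit dyadic interval guaranteed to contain $\lambda_{\min}(H)$; the bookkeeping fits in polynomial space, so it only remains to implement the comparison ``is $\lambda_{\min}(H)\le\mu$?'' in $\PSPACE$ for a $\poly(n)$-bit dyadic $\mu$. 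By Sylvester's criterion applied to the Hermitian matrix $H-\mu I$, this holds if and only if some leading principal submatrix of $H-\mu I$ has non-positive determinant; each such submatrix is again succinctly presented, so --- given that the sign (and vanishing) of the determinant of a succinctly-presented Hermitian matrix can be computed in $\PSPACE$ --- the comparison is a $\PSPACE$ predicate, and a binary search making such calls is an $\mathsf{FP}^{\PSPACE}$ procedure, placing the function problem in $\FPSPACE$.

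The main obstacle is exactly this exact comparison subroutine. Because we are solving a function problem rather than a promise problem, we cannot invoke the gapped decision procedure behind Theorem~\ref{thm:lh}: $\lambda_{\min}(H)$ may lie arbitrarily close to the dyadic midpoint $\mu$, so we genuinely must decide $\lambda_{\min}(H)\le\mu$ on the nose. This is the step that forces us to borrow the succinct-linear-algebra machinery underlying the $\QMAexp\subseteq\PSPACE$ direction of Theorem~\ref{thm:main}: computing, in polynomial space, the sign of an exponentially large determinant whose value has exponentially many bits cannot be done by naive cofactor expansion, and instead proceeds (for instance) via computation modulo many small primes combined with a low-depth parallel determinant algorithm. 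Granting that, the remaining pieces --- the norm bounds, the reduction of the comparison to leading principal minors, and the binary search itself --- are routine.
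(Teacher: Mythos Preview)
Your hardness argument is fine and matches the paper. The membership argument is also ultimately correct, but it rests on a mistaken premise that leads you to work much harder than necessary.

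You assert that ``we cannot invoke the gapped decision procedure behind Theorem~\ref{thm:lh}'' because $\lambda_{\min}(H)$ might sit arbitrarily close to the midpoint $\mu$. This is not so. The paper's one-line proof is simply ``binary search using the $\PSPACE$ decision procedure for \preciseklh'', and that really is enough, for the following reason. Fix any deterministic $\PSPACE$ machine deciding the promise problem; on inputs violating the promise it still halts and outputs \emph{something}. Now observe that, for a query with thresholds $a=\mu$ and $b=\mu+\epsilon$, the implication ``output YES $\Rightarrow \lambda_{\min}<b$'' always holds: if the promise is satisfied this is by correctness, and if the promise is violated then already $a<\lambda_{\min}<b$. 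Symmetrically, ``output NO $\Rightarrow \lambda_{\min}>a$'' always holds. With these two one-sided guarantees one maintains the invariant $\lambda_{\min}\in[L_i,R_i+\epsilon]$ throughout a standard bisection, so after $T=\poly(n)$ rounds with $\epsilon=2^{-t-2}$ the midpoint is within $2^{-t}$ of $\lambda_{\min}$. No exact comparison is ever required, because the problem asks only for a $2^{-\poly(n)}$-approximation, not for the literal leading bits of $\lambda_{\min}$.

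Your alternative route---Sylvester's criterion on $H-\mu I$ together with a $\PSPACE$ sign-of-succinct-determinant subroutine---does work (the determinant of an explicitly indexed $2^n\times 2^n$ integer matrix is computable in $\poly(n)$ space via, e.g., Berkowitz's division-free $\mathrm{NC}^2$ algorithm, and the sign is then read off from the bits), but it imports machinery that the corollary does not need and that the paper does not develop. In short: the proof is correct, but the ``main obstacle'' you identify is not an obstacle, and the paper's intended argument is the direct binary search you dismissed.
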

Here recall that $\FPSPACE$ is the set of functions computable in classical polynomial space.

We have therefore shown that there is, unsurprisingly, a large jump in complexity for the local Hamiltonian problem when the promise gap is only exponentially small instead of polynomially small. Perhaps more surprisingly, $\QMAexp = \PSPACE$ is more powerful than $\PostBQP=\PP$, the class of problems solvable with postselected quantum computation \cite{aaronson05}.

Recall that projected entangled pair states, or PEPS, are a natural extension of matrix product states to two and higher dimensions, and can described as the ground state of certain frustration-free local Hamiltonians \cite{vc04}. A characterization of the computational power of PEPS was given in \cite{swv07}, and can be summarized as follows: let $O_{PEPS}$ be a quantum oracle that, given the description of a PEPS, outputs the PEPS (so the output is quantum). Then $\BQP^{O_{PEPS}}_{\parallel,\text{classical}} = \PostBQP = \PP$, where (following Aaronson \cite{aaronson05}) the subscript  denotes that only classical nonadaptive queries to the oracle are allowed. Moreover, let $\PQP$ stand for the set of problems solvable by a quantum computer with \emph{unbounded error}; then it can be straightforwardly shown that $\PQP^{O_{PEPS}}_{\parallel,\text{classical}} = \PP$ as well (see Appendix \ref{app:peps} for a proof sketch).

On the other hand, suppose we have an oracle $O_{LH}$ that given the description of a local Hamiltonian, outputs its unique ground state\footnote{If the ground space is degenerate, we can always slightly perturb the Hamiltonian to make it nondegenerate.}. Then our results show that $\QMAexp = \PSPACE \subseteq \PQP^{O_{LH}}_{\parallel,\text{classical}}$. This shows that, at least in the setting of unbounded-error quantum computation, PEPS do not capture the full computational complexity of general local Hamiltonian ground states unless $\PP=\PSPACE$. We leave open the problem of determining the complexity of $\BQP^{O_{LH}}_{\parallel,\text{classical}}$.

\section{Definitions}

\subsection{Quantum Merlin Arthur}\label{def:qma}
For our purposes we will need to keep track of the time and space requirements of $\QMA$ protocols, and so we make the following definition:
\begin{definition}We say a promise problem $L=(L_{yes},L_{no})$ is in $\bddQMA{t}{k}{m}{c}{s}$ if there exists a uniform family of quantum circuits $\{ V_x\}_{x\in\{0,1\}^n}$, each of size at most $t(|x|)$, acting on $k(|x|)+m(|x|)$ qubits, so that:\\

If $x \in L_{yes}$ there exists an $m$-qubit state $\ket{\psi}$ such that:
\begin{equation}
\left(\bra{\psi}\otimes \bra{0^k}\right) V^\dagger_x \ket{1}\bra{1}_{out} V_x \left(\ket{\psi}\otimes \ket{0^k}\right) \ge c
\end{equation}
Whereas if $x \in L_{no}$, for all $m$-qubit states $\ket{\psi}$ we have:
\begin{equation}
\left(\bra{\psi}\otimes \bra{0^k}\right) V^\dagger_x \ket{1}\bra{1}_{out} V_x \left(\ket{\psi}\otimes \ket{0^k}\right) \le s.
\end{equation}

  \end{definition}

\begin{remark} $\QMA=\bddQMA{\poly}{\poly}{\poly}{2/3}{1/3}$\end{remark}
\begin{remark} $\QMAexp=\bddQMA{\poly}{\poly}{\poly}{c}{c-2^{-\poly}}$	
\end{remark}

\subsection{Space complexity classes}
\begin{definition}We say a promise problem $L=(L_{yes},L_{no})$ is in $\PQPSPACE$ (unbounded-error quantum polynomial space) if there exists a uniform family of quantum circuits acting on at most polynomial number of qubits that accepts every string $x\in L_{yes}$ with probability at least $c$, and accepts every string $x \in L_{no}$ with probability at most $s$, for some $c > s$.
	
\end{definition}
\begin{theorem}[Watrous \cite{Watrous99,Watrous03}]\label{thm:pqpspace} $\PQPSPACE=\PSPACE$.	
\end{theorem}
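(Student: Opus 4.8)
The plan is to prove the two inclusions separately. The inclusion $\PSPACE \subseteq \PQPSPACE$ is immediate: a deterministic polynomial-space machine is a (trivial) polynomial-space quantum computation that accepts every $x\in L_{yes}$ with probability $1$ and every $x\in L_{no}$ with probability $0$, so $c=1$, $s=0$ works. The content is the reverse inclusion $\PQPSPACE\subseteq\PSPACE$. We are given a uniform family of quantum circuits on $w=\poly(n)$ qubits, possibly with as many as $T=2^{\poly(n)}$ gates (a $\poly$-space machine may run for exponential time), together with thresholds $c>s$ that are fixed real constants. Since there is a rational $q$ with $s<q<c$ and we may hardwire this $q$, it suffices to compute the acceptance probability $p(x)=\bigl\|\,\Pi_{\mathrm{acc}}\,U_T\cdots U_1\ket{0^w}\bigr\|^2$ to within the fixed additive constant $\min(q-s,\,c-q)/2$ and test whether the result exceeds $q$. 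Expanding the norm and the projector, this in turn reduces to approximately evaluating entries of the product $U_T\cdots U_1$ of exponentially many $2^w\times 2^w$ unitaries (and summing $2^{O(w)}$ of them), using only $\poly(n)$ bits of workspace. (Intermediate measurements, if present, can be deferred, or accounted for by one more outer sum over measurement histories.)

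Two difficulties arise: each $U_i$ is an exponentially large matrix that cannot be stored, and there are exponentially many of them, so the naive expansion of an entry as a sum over computation paths has $2^{\Theta(wT)}$ terms — even a single path index requires exponentially many bits. The way around both is a depth-$\lceil\log_2 T\rceil$ divide-and-conquer recursion that computes entries of sub-products on the fly,
\[
\bra{a}\,U_j\cdots U_i\,\ket{b}\;=\;\sum_{c\in\{0,1\}^w}\bra{a}\,U_j\cdots U_{r+1}\,\ket{c}\;\bra{c}\,U_r\cdots U_i\,\ket{b},\qquad r=\Bigl\lfloor\tfrac{i+j}{2}\Bigr\rfloor,
\]
with base case a single local-gate amplitude read off the uniformly generated circuit description. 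At each of the $\lceil\log_2 T\rceil=\poly(n)$ levels we need only store which half we are currently recursing into and the current value of the summation variable $c$, i.e.\ $\poly(n)$ bits per level and $\poly(n)$ bits overall; the recursion tree has exponentially many leaves, which costs only (exponential) time, for which $\PSPACE$ does not charge. Since every sub-product of the $U_i$'s is itself unitary, every quantity being summed has magnitude at most $1$, so each stored number fits in $\poly(n)$ bits of integer part plus whatever fractional precision we carry.

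The step that needs genuine care — and the one I would single out as the main obstacle — is numerical precision: one might worry that approximating such deeply nested exponential sums is hopeless in polynomial space. Working with fixed-point complex numbers to $b$ bits, a single summation over the $2^w$ values of $c$ can inflate the additive error by a factor about $2^w$, and $\lceil\log_2 T\rceil$ such summations are nested, so the final error is of order $2^{O(w\log T)}\cdot2^{-b}=2^{\poly(n)}\cdot2^{-b}$; choosing $b=\poly(n)$ sufficiently large drives this below the fixed constant $\min(q-s,\,c-q)/2$, which is all we need. (Alternatively, if the gate amplitudes lie in a fixed ring such as $\mathbb{Z}[\tfrac12,\tfrac1{\sqrt2},i]$, the recursion can be carried out in exact arithmetic: the integer coefficients grow by at most a factor $2^{O(w)}$ per level, hence to magnitude $2^{\poly(n)}$, i.e.\ $\poly(n)$ bits.) Assembled together, these ingredients give a classical $\poly(n)$-space decision procedure for $L$; equivalently, the construction shows $\PQPSPACE\subseteq\mathrm{PrSPACE}(\poly)$, after which one invokes the classical fact that unbounded-error probabilistic polynomial space equals $\PSPACE$. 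This is in essence Watrous's argument, with \cite{Watrous99,Watrous03} handling the remaining subtleties about which sets of transition amplitudes are admissible.
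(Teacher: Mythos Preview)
The paper does not give a proof of this theorem; it is stated with citation to Watrous and used as a black box. So there is no in-paper argument to compare against beyond the reference itself.

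Your Savitch-style recursion on $\langle a|U_j\cdots U_i|b\rangle$ is indeed the heart of the matter, and your fixed-point error bound $2^{O(w\log T)}\cdot 2^{-b}$ correctly shows that $p(x)$ can be approximated to any precision $2^{-\poly(n)}$ in $\poly(n)$ workspace. That already covers the way the paper actually invokes the theorem (in Theorem~\ref{thm:pqpspace simulation} the gap is $2^{-m-1}$). Two caveats, however. First, taking $c,s$ to be ``fixed real constants'' is not what unbounded error means here; the definition only asks $c>s$, and the gap can be far below any inverse exponential, so computing $p(x)$ to precision $2^{-\poly(n)}$ and comparing to a threshold does not decide the general case. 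Second, your exact-arithmetic fallback is mis-analyzed: when two level-$k$ values are multiplied, the denominator exponent (it tracks the number of Hadamards in the subproduct) adds, so across a level it \emph{doubles}, and the integer numerators carry $\Theta(2^k)$ bits, reaching $\Theta(T)=2^{\poly(n)}$ bits at the top --- they do not ``grow by a factor $2^{O(w)}$ per level'' and do not fit in polynomial space. The fully general statement therefore still rests on the nontrivial classical fact $\mathrm{PrSPACE}(\poly)=\PSPACE$ that you cite at the end; your sketch correctly reduces to it but does not itself supply that step, which is precisely the content Watrous extends to the quantum setting.
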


\begin{definition}We say a promise problem $L=(L_{yes},L_{no})$ is in $\revPSPACE$ (reversible polynomial space) if it can be decided by a polynomial space reversible Turing machine, i.e., a machine for which every configuration has at most one immediate predecessor.  	
\end{definition}

\begin{theorem}[Bennett \cite{bennett89}]\label{thm:revpspace}$\revPSPACE=\PSPACE$.
\end{theorem}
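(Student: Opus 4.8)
The plan is to prove the two inclusions separately; only one direction carries content.

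The inclusion $\revPSPACE \subseteq \PSPACE$ is immediate: a polynomial-space reversible Turing machine is in particular a polynomial-space Turing machine, and the reversibility condition is only a restriction, so any language it decides already lies in $\PSPACE$. For the reverse inclusion $\PSPACE \subseteq \revPSPACE$, fix $L \in \PSPACE$ decided by a deterministic machine $M$ using space $S = S(n) = \poly(n)$. First standardize $M$: append a step counter of $O(S)$ bits so that $M$ explicitly halts after at most $T = 2^{O(S)}$ steps, rejecting on overflow (a space-$S$ machine that runs longer must repeat a configuration and loop forever, so this loses nothing). Thus we may assume $M$ is a halting deterministic machine of time $T \le 2^{O(S)}$ and space $S' = O(S)$, and the target is a reversible machine — one whose configuration graph has in-degree at most one — that on input $x$ halts in a configuration encoding $M$'s verdict while using only $\poly(n)$ space.

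The construction is Bennett's. The naive reversibilization — run $M$ while journaling every move onto an auxiliary tape, producing the reversible map $x \mapsto (x, \textrm{history}, M(x))$, then copy out the answer bit and reverse the journaled run to erase the history — is genuinely reversible but uses $\Theta(T)$ space for the journal, which is exponential. The fix is Bennett's recursive checkpointing (reversible pebble game): split the $T$-step computation into segments, reversibly compute and store only $O(\log T)$ intermediate ``checkpoint'' configurations (each of size $O(S')$), recursively uncompute the journals between consecutive checkpoints, and reuse the journal space across the recursion. This bounds the total space by $O(S' \log T)$ while blowing the time up to $\poly(T)$, which is irrelevant for a space bound. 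Since $\log T = O(S') = \poly(n)$, the space is $O((S')^2) = \poly(n)$. The composite procedure is a syntactically reversible machine computing $x \mapsto (x, M(x))$ in polynomial space, and reading off the second register decides $L$; hence $L \in \revPSPACE$.

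The only real work is the pebbling analysis: maintaining the invariant that at most $O(\log T)$ checkpoints plus one active journal are alive at any time — so that the space stays $O(S' \log T)$ — and checking that gluing together the forward runs, checkpoint copies, and reverse runs leaves the overall machine reversible (in-degree $\le 1$). One should also note the minor interface point that the simulation preserves $x$ and appends the verdict rather than ending in a bare accept/reject state, but that is the standard convention for reversible decision procedures and costs nothing. As an alternative one could instead invoke the Lange--McKenzie--Tapp reversible depth-first traversal of the (deterministic, hence functional) configuration graph, which avoids the time blow-up entirely; but Bennett's argument already suffices here since only the space bound matters.
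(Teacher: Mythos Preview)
The paper does not give a proof of this theorem at all: it is stated as a cited result of Bennett, with no argument supplied, and is later invoked (together with the Lange--McKenzie--Tapp refinement) as a black box in the proof of Theorem~\ref{thm:gappedsucc}. Your proposal therefore cannot be compared to ``the paper's own proof'' because there isn't one.

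That said, your sketch is a correct reconstruction of Bennett's argument. The trivial direction is fine. For the nontrivial direction you correctly identify that naive journaling costs $\Theta(T)$ space, and that Bennett's recursive checkpointing (the reversible pebble game) reduces this to $O(S\log T) = O(S^2) = \poly(n)$ space at the cost of time blow-up, which is irrelevant here. Your remark that Lange--McKenzie--Tapp gives an alternative is apt --- indeed the paper itself leans on their stronger result (reversibility with no space overhead and the convention that the starting configuration has in-degree $0$) when it actually uses reversibility in Theorem~\ref{thm:gappedsucc}. If anything, for the purposes of this paper the LMT construction is the more directly relevant one, since the spectral analysis there exploits that the configuration graph of the reversible machine is a disjoint union of paths.
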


\section{Upper bound}
In this section, our goal will be to prove that $\QMAexp\subseteq\PSPACE$.  We will proceed in two steps, the first will show how to use in-place $\QMA$ amplification techniques from Nagaj, Wocjan, and Zhang \cite{nwz11} to decide any promise problem in $\QMAexp$ with a quantum protocol in which the verifier is allowed exponential time, polynomial space (i.e., acts on a polynomial number of proof and ancilla qubits), completeness $1-2^{-\poly(n)}$ and soundness $2^{-\poly(n)}$.  We then appeal to results of Marriott and Watrous \cite{mw05} and  Watrous \cite{Watrous99} to show that such protocols can be simulated in $\PSPACE$.
\subsection{In-place gap amplification of $\QMAexp$ using phase estimation techniques}
\begin{theorem}[Implicit in Nagaj, Wocjan, and Zhang \cite{nwz11}] For any $r>0$, \[\bddQMA{t}{k}{m}{c}{s}\subseteq\bddQMA{\mathcal{O}(rt(c-s))}{\mathcal{O}(k+r\log(c-s))}{m}{1-2^{-r}}{2^{-r}}.\]
\end{theorem}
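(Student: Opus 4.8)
The plan is to realize the maximum acceptance probability of the verifier as the top eigenvalue of an explicit positive operator on the witness register, and then read off that eigenvalue \emph{in place} by phase estimation of a Szegedy-type walk operator, as in Nagaj--Wocjan--Zhang \cite{nwz11}. Write $\Pi_{\mathrm{in}} = I_m \otimes \ketbra{0^k}{0^k}$ for the projector checking the ancilla is correctly initialized, and $\Pi_{\mathrm{out}} = V_x^\dagger\,(\ket{1}\bra{1}_{out}\otimes I)\,V_x$ for the projector ``the verifier accepts''; both are genuine projectors. A short computation gives $\max_{\ket\psi}\bra{\psi}\bra{0^k}\Pi_{\mathrm{out}}\ket{\psi}\ket{0^k} = \lambda_{\max}(\Pi_{\mathrm{in}}\Pi_{\mathrm{out}}\Pi_{\mathrm{in}})$, i.e. the largest eigenvalue of the $m$-qubit operator $M := (I_m\otimes\bra{0^k})\,\Pi_{\mathrm{out}}\,(I_m\otimes\ket{0^k})$; call this $p_{\max}$, so $p_{\max}\ge c$ in the yes case and $p_{\max}\le s$ in the no case. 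Introduce the reflections $R_{\mathrm{in}} = 2\Pi_{\mathrm{in}} - I$ and $R_{\mathrm{out}} = 2\Pi_{\mathrm{out}} - I = V_x^\dagger(2\,\ket{1}\bra{1}_{out}\otimes I - I)V_x$, and the walk $W = R_{\mathrm{out}}R_{\mathrm{in}}$. Here $R_{\mathrm{out}}$ (and its controlled version) costs two calls to $V_x,V_x^\dagger$, while $R_{\mathrm{in}}$ costs $O(k)$ elementary gates, so one application of (controlled-)$W$ costs $O(t+k)$ gates.

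Next I would apply Jordan's lemma to the pair $R_{\mathrm{in}},R_{\mathrm{out}}$: the Hilbert space decomposes orthogonally into $W$-invariant subspaces of dimension $1$ or $2$; on a two-dimensional block $\mathcal H_j$ the two projectors are rank one, onto lines at some angle $\theta_j\in(0,\pi/2)$, $W$ acts as a rotation with eigenvalues $e^{\pm 2i\theta_j}$, and $\Pi_{\mathrm{in}}\Pi_{\mathrm{out}}\Pi_{\mathrm{in}}$ has eigenvalue $\cos^2\theta_j$ on $\mathcal H_j\cap\mathrm{range}(\Pi_{\mathrm{in}})$ (one-dimensional blocks give $0$ or $1$; treat these as $\theta_j\in\{\pi/2,0\}$). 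Hence $p_{\max}=\max_j\cos^2\theta_j$. Two consequences: in the no case \emph{every} Jordan angle satisfies $\theta_j\ge\arccos\sqrt s$; and in the yes case, if $\ket\psi$ is chosen to be a top eigenvector of $M$, then $\ket{\phi_0}:=\ket\psi\ket{0^k}$ lies entirely in the span of blocks with $\cos^2\theta_j=p_{\max}$, hence with $\theta_j\le\arccos\sqrt c$. Since $\big|\tfrac{d}{dx}\arccos\sqrt x\big| = \tfrac{1}{2\sqrt{x(1-x)}}\ge 1$, we get $\arccos\sqrt s - \arccos\sqrt c \ge c-s$: the yes- and no-angles are separated by at least $c-s$.

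The amplified protocol: Merlin sends the $m$-qubit top eigenvector $\ket\psi$ (the witness register size $m$ is unchanged — this is the ``in place'' point); Arthur appends $\ket{0^k}$, forms $\ket{\phi_0}$ (a superposition of $W$-eigenstates), and runs a \emph{high-confidence} phase estimation of $W$ to additive precision $\epsilon=(c-s)/4$ and failure probability $2^{-r}$, producing an estimate $\hat\phi$ of an eigenphase; he folds it to $\hat\theta := \min(\hat\phi,2\pi-\hat\phi)/2$ and accepts iff $\hat\theta < \arccos\sqrt c + 2\epsilon$. In the yes case $\hat\theta\le\arccos\sqrt c+\epsilon$ except with probability $2^{-r}$; in the no case $\hat\theta\ge\arccos\sqrt s-\epsilon\ge\arccos\sqrt c+3\epsilon$ except with probability $2^{-r}$; so completeness is $1-2^{-r}$ and soundness $2^{-r}$ (rescale $r$ by a constant to absorb lower-order errors). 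The high-confidence phase estimation I would build as the coherent median of $O(r)$ runs of ordinary phase estimation, each using an $O(\log\tfrac{1}{c-s})$-qubit control register and $O(\tfrac{1}{c-s})$ applications of $W$: after the first run the main register is left an exact $W$-eigenstate in the branch conditioned on the first control register, so the later runs deliver independent estimates and a Chernoff bound makes the median $\epsilon$-accurate with probability $1-2^{-\Omega(r)}$. Total cost: $O(\tfrac{r}{c-s})$ applications of (controlled-)$W$, hence circuit size $O(\tfrac{rt}{c-s})$ after absorbing the $O(k)$ cost of each reflection, and $O(k + r\log\tfrac{1}{c-s})$ ancilla qubits beyond the untouched $m$-qubit witness register, which is the bound in the theorem.

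The main obstacle is precisely the need for \emph{exponentially small} phase-estimation failure from a \emph{single} copy of the witness: plain phase estimation reaching failure $2^{-r}$ would require $2^{\Omega(r)}$ applications of $W$, and one cannot simply repeat because Merlin supplies $\ket\psi$ only once. The coherent-median construction resolves this with only an $O(r)$ multiplicative overhead, at the cost of $O(r)$ fresh control registers — which is exactly why the ancilla count carries the $r\log\tfrac{1}{c-s}$ term — and one must verify it still works when $\ket{\phi_0}$ is a nontrivial superposition over Jordan blocks and over the $\pm\theta_j$ branches inside each block (the latter handled by the folding step). The remaining ingredients — $p_{\max}=\lambda_{\max}(M)$, the Jordan-block dictionary, the derivative bound giving the $c-s$ angular separation, and the gate counts for $R_{\mathrm{in}},R_{\mathrm{out}}$ and their controlled versions — are routine.
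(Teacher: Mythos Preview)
Your proposal is correct and follows essentially the same approach as the paper: define the two projectors and their associated reflections, run phase estimation on the product $R_{\mathrm{out}}R_{\mathrm{in}}$ with $O(\log\tfrac{1}{c-s})$ bits of precision, and amplify to failure probability $2^{-r}$ by taking the median of $r$ independent runs. Your write-up is in fact considerably more detailed than the paper's --- supplying the Jordan-block dictionary, the $\arccos\sqrt{x}$ derivative bound for the angular gap, and the reason the median trick works on a single witness copy --- whereas the paper simply states the procedure and defers the analysis to \cite{mw05,nwz11}.
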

\begin{proof}
	Let $L=(L_{yes}, L_{no})$ be a promise problem in $\QMA(c,s)$ and $\{V_x\}_{x\in\{0,1\}^n}$ the corresponding uniform family of verification circuits.
Define the projectors:
\begin{align}
\Pi_0 &= I_m \otimes \ket{0^k}\bra{0^k} \\
\Pi_1 &= V^\dagger_x \left(\ket{1}\bra{1}_{out} \otimes I_{m+k-1}\right) V_x
\end{align}
and the corresponding reflections:
\begin{align}
R_0 &= 2\Pi_0 - I \\
R_1 &= 2\Pi_1 - I.
\end{align}
Now consider the following procedure:
\begin{enumerate}
\item Perform $r$ trials of phase estimation of the operator $R_1R_0$ on the state $\ket{\psi}\otimes \ket{0^k}$, with $\mathcal{O}(\log(c-s))$ bits of precision and $1/16$ failure probability. 
\item If the median of the $r$ results is at most $\phi_c = \arccos\sqrt{c}/\pi$, output YES; otherwise if the result is at least $\phi_s = \arccos\sqrt{s}/\pi$, output NO.
\end{enumerate}
Phase estimation of an operator $U$ up to $a$ bits of precision requires $\mathcal{O}(a)$ ancilla qubits and $\mathcal{O}(2^a)$ applications of the control-$U$ operation.  Thus, the above procedure, which uses $r$ applications of phase estimation to precision $\alpha=\mathcal{O}(\log(c-s))$ on the $V_x$ operator, can be implemented by a circuit of size $\mathcal{O}(rt2^{\alpha})=\mathcal{O}(rt(c-s))$ using $\mathcal{O}(r\alpha)$ extra ancillia qubits.
 Using the standard analysis of in-place $\QMA$ error amplification \cite{mw05,nwz11}, it can be seen that this procedure has completeness probability at least $1-2^{-r}$ and soundness at most $2^{-r}$.
\end{proof}

Thus, we get the following corollaries:
\begin{corollary}\label{obvious1}For all $r>0$,
$\QMAexp\subseteq\bddQMA{r2^{\poly}}{r\cdot \poly}{\poly}{1-2^{-r}}{2^{-r}}$.
\end{corollary}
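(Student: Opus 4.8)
The plan is to read the statement off directly from the preceding gap-amplification theorem, with no new ideas needed. First I would invoke the characterization $\QMAexp=\bddQMA{\poly}{\poly}{\poly}{c}{c-2^{-\poly}}$ recorded in the Remarks above: every $L\in\QMAexp$ is decided by a verifier of size $t=\poly$ acting on $k=\poly$ ancilla qubits and $m=\poly$ proof qubits with a promise gap $c-s\ge 2^{-p(n)}$ for some fixed polynomial $p$. Then I would apply the gap-amplification theorem to this protocol with amplification parameter $r$ and chase the three resource bounds appearing in its conclusion.

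The bookkeeping, absorbing polynomial factors into $2^{\poly}$ when convenient, is as follows. The resulting circuit size is $\mathcal{O}(rt/(c-s))=\mathcal{O}(r\cdot\poly\cdot 2^{p(n)})=r\cdot 2^{\poly}$; the resulting ancilla count is $\mathcal{O}(k+r\log(1/(c-s)))=\mathcal{O}(\poly+r\cdot p(n))=r\cdot\poly$; the proof register size $m=\poly$ is unchanged; and the completeness and soundness become $1-2^{-r}$ and $2^{-r}$ respectively. Taking the union over all polynomials $p$ --- equivalently, over all $c>s$ with $c-s>\exp(-\poly)$ --- then gives $\QMAexp\subseteq\bddQMA{r2^{\poly}}{r\cdot\poly}{\poly}{1-2^{-r}}{2^{-r}}$, which is the claim.

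I do not expect any genuine obstacle; the corollary is pure parameter accounting. The one place where I would be careful is the substitution $c-s=2^{-\poly}$ into the ``$\mathcal{O}(rt(c-s))$'' circuit-size bound of the theorem: that factor really encodes the phase-estimation cost $2^{\alpha}$ with $\alpha=\mathcal{O}(\log(1/(c-s)))$, so an exponentially small gap \emph{inflates} the circuit to size $2^{\poly}$ rather than shrinking it. Once this is read correctly, the containment is immediate.
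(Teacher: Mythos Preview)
Your proposal is correct and matches the paper's treatment: the paper offers no separate proof of this corollary, simply stating it as an immediate consequence of the preceding gap-amplification theorem together with the definition of $\QMAexp$, which is exactly the parameter substitution you carry out. Your caveat about reading the ``$\mathcal{O}(rt(c-s))$'' and ``$\log(c-s)$'' bounds as $\mathcal{O}(rt/(c-s))$ and $\log(1/(c-s))$ is well taken---the theorem's statement has this sign slip, but the proof makes the intended meaning clear, and you interpret it correctly.
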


\begin{corollary}\label{obvious2}For every problem $L\in\QMAexp$, there exists an $m \in \poly$ so that:
\[L\in\bddQMA{2^{\poly}}{\poly}{m}{1-2^{-(m+2)}}{2^{-(m+2)}}\]
\end{corollary}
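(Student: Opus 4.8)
The plan is to obtain the claim as a direct instantiation of Corollary~\ref{obvious1} (equivalently, of the in-place amplification theorem), choosing the amplification parameter $r$ to be a suitable polynomial. The one subtlety to watch for is a potential circular dependence: we want the \emph{amplified} protocol to have completeness $1-2^{-(m+2)}$ and soundness $2^{-(m+2)}$ where $m$ is the number of witness qubits \emph{of the amplified protocol}, so we must check that amplification does not change the size of the witness register.

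First I would unpack the hypothesis: if $L\in\QMAexp$ then, by the remark identifying $\QMAexp$ with $\bddQMA{\poly}{\poly}{\poly}{c}{c-2^{-\poly}}$, there is a verification family witnessing $L\in\bddQMA{\poly}{\poly}{m}{c}{c-2^{-q}}$ for some fixed polynomials $m,q$ and some $c$. The key point is that the witness size $m$ is a polynomial fixed once and for all by this starting protocol, \emph{before} any amplification is carried out.

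Next I would inspect the amplification theorem: its conclusion $\bddQMA{\mathcal{O}(rt(c-s))}{\mathcal{O}(k+r\log(c-s))}{m}{1-2^{-r}}{2^{-r}}$ has witness register size exactly equal to that of the hypothesis — the phase-estimation procedure on $R_1R_0$ acts on the same witness state $\ket{\psi}$ and only introduces $\mathcal{O}(r\alpha)$ additional \emph{ancilla} qubits, never enlarging the proof register. Hence, tracking this through, for the problem $L$ and every $r>0$ we get $L\in\bddQMA{r\cdot 2^{\poly}}{r\cdot\poly}{m}{1-2^{-r}}{2^{-r}}$ with the same $m$ as in the starting protocol; this is precisely Corollary~\ref{obvious1}, refined to record that the witness size is unchanged.

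Finally I would set $r:=m+2$. Since $m$ is a polynomial in $|x|$, so is $r$, whence $r\cdot 2^{\poly}=2^{\poly}$ and $r\cdot\poly=\poly$, and the completeness/soundness parameters become exactly $1-2^{-(m+2)}$ and $2^{-(m+2)}$. This yields $L\in\bddQMA{2^{\poly}}{\poly}{m}{1-2^{-(m+2)}}{2^{-(m+2)}}$, as desired. There is no genuine obstacle here — the only thing to be careful about is the bookkeeping observation that $m$ is determined before $r$ is chosen, so the substitution $r=m+2$ is legitimate and introduces no circularity; everything else is a direct application of Corollary~\ref{obvious1}.
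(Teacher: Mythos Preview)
Your proposal is correct and follows exactly the paper's approach: the paper's proof is the single sentence ``Corollary~\ref{obvious2} follows from the definition of $\QMAexp$ and Corollary~\ref{obvious1} with $r=m+2$.'' You have simply unpacked this one-liner, making explicit the bookkeeping point that the in-place amplification leaves the witness register size $m$ unchanged, so that setting $r=m+2$ is non-circular.
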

Notice that Corollary \ref{obvious2} follows from the definition of $\QMAexp$ and Corollary \ref{obvious1} with $r=m+2$.
\subsection{$\PSPACE$ simulation}
\begin{theorem} \label{thm:pqpspace simulation}
For all $m\in\poly$:\[\bddQMA{2^{\poly}}{\poly}{m}{1-2^{-(m+2)}}{2^{-(m+2)}}\subseteq\PQPSPACE\]\end{theorem}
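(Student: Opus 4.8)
The plan is to simulate the hypothesized protocol by a \emph{witnessless} unbounded-error quantum computation on polynomially many qubits, using the ``maximally mixed witness'' idea of Marriott and Watrous \cite{mw05}: instead of feeding $V_x$ an optimal proof, feed it the maximally mixed state of the proof register, which costs only a factor $2^m$ in the acceptance probability but requires no prover.

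In detail, write $V_x$ for the verification circuit (size $2^{\poly}$, acting on $m+k=\poly$ qubits) and let
\[
A \;=\; (I_m\otimes\bra{0^k})\,V_x^\dagger\,(\ketbra{1}{1}_{out}\otimes I)\,V_x\,(I_m\otimes\ket{0^k}),
\]
a positive semidefinite operator on the $2^m$-dimensional proof space whose eigenvalues lie in $[0,1]$; for any proof $\ket\psi$ the acceptance probability is $\bra\psi A\ket\psi$, so $\lambda_{\max}(A)\ge c$ if $x\in L_{yes}$ and $\lambda_{\max}(A)\le s$ if $x\in L_{no}$. Now consider the quantum circuit that prepares the maximally entangled state $2^{-m/2}\sum_i\ket{i}\ket{i}$ across the proof register and a fresh $m$-qubit purifying register, sets the $k$ ancillas to $\ket{0^k}$, applies $V_x$, and measures the output qubit. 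It uses $2m+k=\poly$ qubits, and since $V_x$ is a uniform circuit of polynomial width (and exponential length), so is this circuit; hence it is a legal $\PQPSPACE$ machine. Its acceptance probability equals $\tr\!\big[(\ketbra{1}{1}_{out}\otimes I)\,V_x(\tfrac{I_m}{2^m}\otimes\ketbra{0^k}{0^k})V_x^\dagger\big] = 2^{-m}\tr A$.

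Finally I would check the promise. If $x\in L_{yes}$, then $2^{-m}\tr A \ge 2^{-m}\lambda_{\max}(A)\ge 2^{-m}c \ge \tfrac34\cdot 2^{-m}$, since $c = 1-2^{-(m+2)}\ge\tfrac34$. If $x\in L_{no}$, then $A$ has $2^m$ eigenvalues, each at most $s=2^{-(m+2)}$, so $\tr A\le 2^m s$ and $2^{-m}\tr A\le s = \tfrac14\cdot 2^{-m}$. Thus the circuit accepts every $L_{yes}$ instance with probability at least $\tfrac34\cdot 2^{-m}$ and every $L_{no}$ instance with probability at most $\tfrac14\cdot 2^{-m}$, and since $\tfrac34\cdot 2^{-m} > \tfrac14\cdot 2^{-m}$ this establishes membership in $\PQPSPACE$.

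I do not anticipate a real obstacle. The two points deserving care are: confirming that the $\PQPSPACE$ model allows exponential-length, polynomial-width uniform circuits (so that the exponential-time verifier $V_x$ may be run verbatim within polynomial space), and observing that the ``$+2$'' slack in the completeness and soundness bounds is exactly what makes the crude no-case estimate $\tr A\le 2^m s$ still leave the yes-case acceptance probability $2^{-m}c$ strictly larger --- so the finer iterated estimation of \cite{mw05} can be avoided, the soundness being already exponentially small.
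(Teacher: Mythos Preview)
Your proposal is correct and essentially identical to the paper's proof: both define the same positive semidefinite operator on the proof space, replace the witness by the maximally mixed state (you purify it, the paper equivalently feeds a random basis state), and use the bounds $\tr A \ge \lambda_{\max}(A) \ge c \ge 3/4$ in the YES case and $\tr A \le 2^m s = 1/4$ in the NO case to get a $\PQPSPACE$ computation with acceptance probabilities $\ge \tfrac34\cdot 2^{-m}$ versus $\le \tfrac14\cdot 2^{-m}$. Your closing remarks about the role of the ``$+2$'' slack and the exponential-length/polynomial-width circuit model are exactly the points the paper relies on implicitly.
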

\begin{proof}
For any $m, p \in\poly$, consider a problem $L\in\bddQMA{2^{\poly}}{p}{m}{1-2^{-(m+2)}}{2^{-(m+2)}}$, and let $\{V'_x\}_{x\in\{0,1\}^n}$ be the corresponding uniform family of verification circuits.
If $x\in L_{yes}$ there exists an $m$-qubit state $\ket{\psi}$ such that
\begin{equation}
\left(\bra{\psi}\otimes \bra{0^p}\right) V'^\dagger_x \ket{1}\bra{1}_{out} V'_x \left(\ket{\psi}\otimes \ket{0^p}\right) \ge 1-2^{-(m+2)}
\end{equation}
whereas if $x \in L_{no}$, for all $m$-qubit states $\ket{\psi}$ we have
\begin{equation}
\left(\bra{\psi}\otimes \bra{0^p}\right) V'^\dagger_x \ket{1}\bra{1}_{out} V'_x \left(\ket{\psi}\otimes \ket{0^p}\right) \le 2^{-(m+2)}.
\end{equation}
For convenience, define the $2^m \times 2^m$ matrix:
\begin{equation}
Q_x := \left(I_{2^m}\otimes \bra{0^p}\right) V'^\dagger_x \ket{1}\bra{1}_{out} V'_x \left(I_{2^m}\otimes \ket{0^p}\right).
\end{equation}
$Q_x$ is positive semidefinite, and $\bra{\psi}Q_x\ket{\psi}$ is the acceptance probability of $V'_x$ on witness $\psi$.
Note that
\begin{equation}
x\in L_{yes} \Rightarrow \tr[Q_x]\ge 1 - 2^{-(m+2)} \ge 3/4
\end{equation}
since the trace is at least the largest eigenvalue, and $m\geq 0$; likewise,
\begin{equation}
x\in L_{no} \Rightarrow \tr[Q_x]\le 2^m \cdot 2^{-(m+2)} = 1/4
\end{equation}
since the trace is the sum of the $2^m$ eigenvalues, each of which is at most $2^{-(m+2)}$. 

Therefore our problem reduces to determining whether the trace of $Q_x$ is at least $3/4$ or at most $1/4$.  Now we will show that using the totally mixed state $2^{-m}I_m$ (alternatively, a random computational basis state) as the witness of the verification procedure encoded by $Q_x$, succeeds with the desired completeness and soundness bounds.  The acceptance probability is given by
\begin{equation}
\tr(Q_x 2^{-m}I_m) = 2^{-m} \tr(Q_x)
\end{equation}
which is at least $2^{-m} \cdot 3/4$ if $x\in L_{yes}$, and at most $2^{-m} \cdot 1/4$ if $x\in L_{no}$. Thus we have reduced our original problem to determining whether an exponentially long quantum computation with \emph{no} witness, acting on a polynomial number of qubits, accepts with probability at least $c'$ or at most $s'$ with $c' - s'$ being exponentially small. This is a $\PQPSPACE$ problem.

\end{proof}
\begin{theorem}
$\QMAexp\subseteq\PSPACE$.
\end{theorem}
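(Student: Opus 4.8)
The plan is to compose the two reductions from the preceding subsections with Watrous's characterization of unbounded-error quantum polynomial space. Given an arbitrary $L\in\QMAexp$, the first step is Corollary \ref{obvious2}: the in-place phase-estimation amplification of Nagaj, Wocjan, and Zhang with $r=m+2$ turns the exponentially small gap into the gap $\left(1-2^{-(m+2)},\,2^{-(m+2)}\right)$, at the cost of exponential verifier running time but with the space bound --- and crucially the witness register size $m$ --- staying polynomial. The one place where the quantum setting bites is exactly here: ordinary amplification by taking many fresh copies of the witness would blow up the space, whereas the NWZ procedure reuses the single witness register and only appends $\mathcal{O}(r\cdot\poly)$ ancillae. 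Since that is already recorded in the earlier theorem, here it is merely cited.

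Second, I would feed the strongly amplified protocol into Theorem \ref{thm:pqpspace simulation}. That result observes that in the YES case $\tr[Q_x]\ge\lambda_{\max}(Q_x)\ge 1-2^{-(m+2)}\ge 3/4$ while in the NO case $\tr[Q_x]\le 2^m\lambda_{\max}(Q_x)\le 1/4$, and that running the verifier on the maximally mixed witness $2^{-m}I_m$ yields acceptance probability $2^{-m}\tr[Q_x]$, still separated though now only by the exponentially small amount $2^{-m-1}$. Unbounded-error quantum computation tolerates an exponentially small gap, so this is a genuine $\PQPSPACE$ computation: exponential time, polynomial space, and no witness. Finally I would invoke Watrous's theorem (Theorem \ref{thm:pqpspace}) that $\PQPSPACE=\PSPACE$, so that chaining the inclusions gives
\[
\QMAexp\subseteq\PQPSPACE=\PSPACE.
\]

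At this point there is essentially no obstacle remaining: all of the real work lives in the cited results --- the NWZ amplification, the Marriott--Watrous analysis underlying Theorem \ref{thm:pqpspace simulation}, and Watrous's simulation of unbounded-error quantum space. The only thing to verify is the bookkeeping: that the parameters produced by Corollary \ref{obvious2} (polynomial witness size $m$, completeness $1-2^{-(m+2)}$, soundness $2^{-(m+2)}$, exponential time) are precisely the hypotheses of Theorem \ref{thm:pqpspace simulation}, which holds by construction.
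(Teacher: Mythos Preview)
Your proposal is correct and follows exactly the paper's approach: the paper's proof is the one-line statement that the result follows from Theorem~\ref{thm:pqpspace}, Corollary~\ref{obvious2}, and Theorem~\ref{thm:pqpspace simulation}, and you have simply unpacked the logic of this chain accurately.
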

\begin{proof}
This follows from Theorem \ref{thm:pqpspace}, Corollary \ref{obvious2}, and Theorem \ref{thm:pqpspace simulation}.
\end{proof}

\section{Lower bound}
In this section we will show that $\PSPACE\subseteq\QMAexp$.  To do this we proceed with two steps.  In the first we show that, given a succinctly representable sparse matrix and promised that either the smallest eigenvalue is 0 or at most $1/2^{\poly}$, deciding which is the case is a $\PSPACE$-complete problem.  In the second step, we give a $\QMAexp$ protocol for this problem.

By a matrix being succinctly representable and sparse, we mean the following:
\begin{definition}
Let $M$ be a $2^{\poly(n)} \times 2^{\poly(n)}$ matrix, where $n$ is the input size. We say that $M$ is a \emph{succinctly representable sparse} matrix if there are at most polynomially many nonzero entries in each row, and moreover there is a (uniformly generated) circuit, \emph{the succinct encoding}, that outputs the nonzero entries of any given row in $\poly(n)$ time.
\end{definition}
\subsection{The Succinct Determinant problem}
\begin{definition}[\succdet]
Given as input is a succinct encoding of $A$, a succinctly representable sparse matrix, whose determinant is promised to be 0, 1, or -1. Moreover, each column of $A$ has at no more than two 1s.  Does $\det(A)$ vanish?
\end{definition}
\begin{theorem}[Grenet, Koiran and Portier \cite{GKP13}]\label{thm:succdet}
$\succdet$ is $\PSPACE$-hard.
\end{theorem}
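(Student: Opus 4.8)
\begin{pr_sk}
The plan is to reduce an arbitrary language in $\PSPACE$ to $\succdet$ by routing it through reversible computation and the classical identity that writes a determinant as a signed sum over cycle covers of a digraph.

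First I would invoke Bennett's theorem (Theorem~\ref{thm:revpspace}, $\revPSPACE=\PSPACE$): given $L\in\PSPACE$, fix a reversible Turing machine $M$ deciding $L$ in space $p(n)$. I would augment each configuration with a step counter ranging over $\{0,1,\dots,T(n)\}$ for $T(n)=2^{q(n)}$ chosen to dominate $M$'s running time, and arrange that once $M$ enters the accept (respectively reject) state it merely idles, incrementing the counter, until it saturates. The configuration graph $G$ then has $N=2^{\Theta(\poly(n))}$ vertices --- I would let every bit string of the relevant length be a vertex, with malformed strings isolated --- and every vertex has out-degree at most $1$ (determinism) and in-degree at most $1$ (reversibility), while the counter forces acyclicity; hence $G$ is a disjoint union of simple directed paths. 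Let $s$ be the initial configuration (a source) and $t$ the canonical accepting configuration with saturated counter; by construction $t$ is a sink, and it is the far endpoint of the path leaving $s$ exactly when $M$ accepts $x$. Since one forward step and one backward step of $M$ are both poly-time computable, $G$ is succinctly representable.

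Next I would form the $0/1$ matrix $A$ on vertex set $V(G)$: a $1$ in position $(v,u)$ for every edge $u\to v$ of $G$; a $1$ on the diagonal at every vertex except $s$ and $t$ (``self-loops''); and a single ``feedback'' $1$ in position $(s,t)$. Each column of $A$ lists the out-edges of a vertex, so it has at most two $1$s (a graph edge and possibly a self-loop, or the feedback entry), and $A$ is succinctly encoded. I would then show, using $\det A=\sum_{\sigma}\operatorname{sgn}(\sigma)\prod_i A_{i,\sigma(i)}$, that $\det A=\pm 1$ when $M$ accepts and $\det A=0$ otherwise: in the accepting case, propagating constraints along each path shows that the path $s\to\cdots\to t$ plus the feedback edge is forced into a single long cycle while every other vertex is forced onto its self-loop, so there is exactly one cycle cover and $\det A=\pm 1$; in the rejecting case the path leaving $s$ ends at a sink other than $t$ whose only out-edge (its self-loop) is already blocked by its committed predecessor, so no cycle cover exists and $\det A=0$. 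Thus $x\in L\iff\det A\neq 0$, and since $\det A\in\{-1,0,1\}$ in all cases, $A$ is a legal input to $\succdet$; as $x\mapsto A$ is computable in polynomial time and $\PSPACE$ is closed under complement, this establishes $\PSPACE$-hardness.

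The hard part will be the bookkeeping in the gadget: choosing the augmented machine and the placement of self-loops and the feedback edge so that the ``forced cycle cover'' analysis is airtight --- in particular so that $t$ is truly a sink and is the endpoint of $s$'s path precisely in the YES case, and so that every garbage component (malformed indices, reject-sinks, and --- ruled out by the counter --- would-be cycles) contributes exactly one cycle cover in the YES case and never creates an obstruction, whereas in the NO case the obstruction comes solely from $s$'s own path, all while respecting the two-$1$s-per-column restriction. Once that is set up, the remainder is just the determinant/cycle-cover expansion together with Bennett's theorem.
\end{pr_sk}
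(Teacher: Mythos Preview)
Your proposal is correct and follows essentially the same route as the paper: build the configuration graph of a polynomial-space machine, add a feedback edge from the accepting configuration back to the start, put self-loops on all remaining vertices, and read off the determinant via the cycle-cover expansion. The one noteworthy difference is that the paper's proof of this particular theorem uses only a \emph{deterministic} machine (so out-degree $\le 1$, with acyclicity and uniqueness of the cycle cover left somewhat implicit), whereas you invoke Bennett's theorem and a step counter up front to get a reversible machine whose configuration graph is a disjoint union of simple paths; the paper postpones the appeal to reversibility until the proof of Theorem~\ref{thm:gappedsucc}, where control over in-degrees becomes essential for the eigenvalue analysis. Your extra care (reversibility, explicit counter, the remark about complementation) buys a cleaner uniqueness argument and already sets up exactly the structure needed downstream, at no real cost.
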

\begin{proof}
Let $L=(L_{yes},L_{no}) \in \PSPACE$ be decided by a polynomial space deterministic Turing Machine $M$. Consider the configuration graph $G^M$ of $M$: each vertex of $G^M$ corresponds to one of the exponentially many configurations of $M$, each of which is describable with polynomially many bits. The configuration graph $G^M$ has an edge from $c$ to $c'$ if and only if $c'$ can be reached from $c$ in one step of computation. It is straightforward to see that $G^M$ has the following properties:
\begin{itemize}
\item Since $M$ is deterministic, all vertices of $G^M$ have out-degree at most 1, and $G_M$ has no cycles.
\item The adjacency matrix of $G^M$ is a succinctly representable sparse matrix.
\item $M$ accepts input $x$ if and only if there is a path in $G^M$ from the starting configuration $s_x$ to the accepting configuration $t$.
\end{itemize}

Now on input $x$, consider the graph $G^M_x$ obtained by adding an edge from the accepting configuration $t$ to the starting configuration $s_x$, and adding self-loops on all other vertices. 

Recall that a {\sl cycle cover} of a directed graph is a set of disjoint cycles that are subgraphs containing all the vertices of the graph.  We define the {\sl signed weight} of a cycle cover to be the product of the weights of the edges in the cycle cover, multiplied by $(-1)^\ell$, where $\ell$ is the number of cycles of even length in the cycle cover.  We can interpret the determinant of the adjacency matrix of a directed graph as the sum of the signed weights of all cycle covers of the graph. 

Let $A^M_x$ be the adjacency matrix of $G_x^M$.  Now $G^M_x$ has a cycle cover if and only if $M$ accepts $x$ and there is a path from $s_x$ to $t$, and so $\det(A^M_x) = \pm 1$, depending on the signed weight of the cycle cover; otherwise, $\det(A^M_x) = 0$. Therefore deciding whether $\det(A^M_x)$ vanishes is $\PSPACE$-hard.	
\end{proof}
We can immediately see that the complexity of the $\succdet$ problem doesn't get easier if we are promised that the succinctly representable sparse input matrix $A$ is symmetric and positive semidefinite: notice that the  matrix $(A^M_x)^T A^M_x$ is succinctly representable sparse because there are at most two $1$'s in each column of $A^M_x$ and if we can decide if $\det((A^M_x)^T A^M_x)=\det(A^M_x)^2$ vanishes (or is equal to $1$), we can certainly decide if $\det(A^M_x)$ vanishes. 

From here, we would like to argue that given a succinctly representable sparse and symmetric PSD matrix $A$, it is $\PSPACE$-hard to determine whether the smallest eigenvalue $\lambda_{min}$ satisfies $\lambda_{min} = 0$ or $\lambda_{min} > 2^{-\text{poly}}$, promised that one of these is the case. We will see later that this problem can be solved in $\QMAexp$. Unfortunately, this promise does not generally hold for succinctly representable sparse symmetric matrices; if $A$ is nonsingular, the smallest eigenvalue can at worst still be doubly exponentially small. We will therefore need to modify the prior $\PSPACE$-hard construction to show that the following $\gappedsucc$ problem is still $\PSPACE$-hard.
\begin{definition}[$\gappedsucc$]
Given as input is a succinct encoding of $A$, a positive semidefinite, symmetric, and succinctly representable sparse matrix, whose entries are 0, 1, or 2. Moreover, the smallest eigenvalue of $A$ is promised to be either zero or at least $2^{-g(n)}$ for some polynomial $g(n)$. Output YES if the smallest eigenvalue of $A$ is zero.
\end{definition}
\begin{theorem}\label{thm:gappedsucc}
$\gappedsucc$ is $\PSPACE$-hard.
\end{theorem}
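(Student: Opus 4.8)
The plan is to modify the configuration‑graph construction from the proof of Theorem~\ref{thm:succdet} and then pass to a Gram matrix to get symmetry and positive semidefiniteness. Fix a promise problem $L\in\PSPACE$ and a \emph{halting} deterministic polynomial‑space Turing machine $M$ deciding it (a step counter that rejects after $T=2^{\poly(n)}$ steps forces halting without changing $L$); write $s_x$ for the start configuration on input $x$ and $t$ for the unique accepting configuration. As in the proof of Theorem~\ref{thm:succdet}, let $G^M_x$ be the configuration graph of $M$ with the edge $t\to s_x$ and a self‑loop at every vertex except $t$ added, and let $A=A^M_x$ be its $0$--$1$ adjacency matrix, so that $\det A=\pm1$ if $M$ accepts $x$ and $\det A=0$ otherwise. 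The matrix produced by the reduction is $B:=AA^\top$.

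First I would check the syntactic requirements. $B$ is symmetric and positive semidefinite, and $B_{ij}$ equals the number of common out‑neighbours of $i$ and $j$ in $G^M_x$. Since $M$ is deterministic, every vertex has out‑degree at most $2$ (its configuration‑successor, if any, together with its self‑loop; $t$ has only the back‑edge), so $B_{ii}\le 2$; and since the configuration graph of $M$ is acyclic, any two distinct vertices have at most one common out‑neighbour, so $B_{ij}\le 1$ for $i\ne j$. Hence all entries of $B$ lie in $\{0,1,2\}$. Each row of $B$ has only $O(1)$ nonzero entries --- the in‑neighbours of the (at most two) out‑neighbours of $i$, of which there are $O(1)$ since Turing‑machine configurations have $O(1)$ predecessors --- all computable in $\poly(n)$ time from $M$'s transition relation, so $B$ is succinctly representable and sparse. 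Finally $\det B=(\det A)^2$, so $\lambda_{\min}(B)=0$ exactly when $M$ fails to accept $x$; thus deciding whether $\lambda_{\min}(B)=0$ decides $\overline L$, and since $\PSPACE$ is closed under complement this gives $\PSPACE$‑hardness of \gappedsucc ---\emph{provided} the promised eigenvalue gap holds.

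The substance of the proof is that gap: I must show that $\det A\ne 0$ forces $\lambda_{\min}(B)=\sigma_{\min}(A)^2=\|A^{-1}\|^{-2}\ge 2^{-\poly(n)}$, i.e.\ $\|A^{-1}\|\le 2^{\poly(n)}$. Order the vertices so that the unique computation path $s_x=v_0\to v_1\to\cdots\to v_L=t$ (of length $L\le T=2^{\poly(n)}$) comes first. Every out‑edge of a path vertex stays on the path, so $A$ is block lower‑triangular, $A=\left(\begin{smallmatrix}P&0\\ X&R\end{smallmatrix}\right)$, where $P$ is the adjacency matrix of that path together with its self‑loops and the back‑edge $v_L\to v_0$, and $R=I+N$ with $N$ the nilpotent adjacency matrix of the remaining part of the configuration graph (acyclic, every vertex of out‑degree at most $1$); hence $A^{-1}=\left(\begin{smallmatrix}P^{-1}&0\\ -R^{-1}XP^{-1}&R^{-1}\end{smallmatrix}\right)$. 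For $P$, writing $P=J+e_L(e_0-e_L)^\top$ with $J$ unit upper‑bidiagonal, $J^{-1}$ has all entries in $\{0,\pm1\}$ (so $\|J^{-1}\|\le L+1$), and a rank‑one (Sherman--Morrison) update --- whose scalar denominator works out to $(-1)^L$, consistent with $\det P=\pm1$ --- gives $\|P^{-1}\|=O(L)$. For $R$, in the Neumann series $R^{-1}=\sum_{k\ge0}(-N)^k$ the uniqueness of the forward path from each vertex forces every entry of $R^{-1}$ to be $0$ or $\pm1$, so each row and each column of $R^{-1}$ has at most $2^{\poly(n)}$ nonzeros and $\|R^{-1}\|\le 2^{\poly(n)}$. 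Together with $\|X\|=O(1)$, combining the blocks yields $\|A^{-1}\|\le 2^{\poly(n)}$, and the reduction is complete.

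I expect the bound $\|R^{-1}\|\le 2^{\poly(n)}$ to be the only real obstacle: the naive estimate $\|R^{-1}\|\le\sum_k\|N\|^k$ is \emph{doubly} exponential, and one must use the fact that each vertex has at most one out‑edge --- so that the configuration graph is a forest and the Neumann‑series entries never exceed $1$ in magnitude --- to bring this down to singly exponential; this is precisely the obstruction flagged just before the theorem statement. A convenient way to package the argument is to start, via Bennett's theorem $\revPSPACE=\PSPACE$ (Theorem~\ref{thm:revpspace}), from a reversible $M$, so that the configuration graph is a disjoint union of simple directed paths and the nilpotency of $N$ (with index at most $2^{\poly(n)}$) is manifest; one could then equally well use $A^\top A$ in place of $AA^\top$.
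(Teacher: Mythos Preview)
Your argument is correct, but it follows a genuinely different route from the paper. The paper commits from the outset to a \emph{reversible} machine (via Theorem~\ref{thm:revpspace} and Lange--McKenzie--Tapp), so the configuration graph is a disjoint union of simple paths; the modified graph $G^M_x$ then decomposes into blocks whose $A^\top A$ matrices are explicit tridiagonal matrices, and the paper computes their eigenvalues \emph{exactly} via the Chebyshev recurrence, obtaining $\lambda_{\min}=2(1-\cos(\tfrac{2\pi}{2\ell+1}))=\Theta(\ell^{-2})$ on each block. Your approach is softer and more general: you stay with an arbitrary halting deterministic machine, pass to $AA^\top$, and bound $\|A^{-1}\|$ by a block upper-triangular decomposition (computation path versus the rest) together with Sherman--Morrison for the path block and the Neumann series with the ``unique forward path'' observation for the rest. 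Reversibility, which you mention only as a convenience at the end, is in the paper the mechanism that makes the explicit spectral computation possible at all. What you gain is robustness (no need for the reversibility reduction, and the argument would survive mild changes to the graph); what the paper gains is a sharp constant---$\Theta(\ell^{-2})$ rather than merely $2^{-\poly}$---and a cleaner, coordinate-free statement of which matrices actually arise. Two small points worth tightening in a final write-up: the acyclicity of $G^M$ (needed for the off-diagonal entries of $B$ to be $\le 1$ and for $N$ to be nilpotent) genuinely relies on your step counter, not on determinism alone; and the column bound for $R^{-1}$ is justified simply because the total number of configurations is $2^{\poly(n)}$, not because of the $O(1)$-predecessor property (which you only need for the sparsity of $B$).
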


\begin{proof}
	In this theorem, we will adapt the construction of Theorem \ref{thm:succdet} to analyze the spectrum of the underlying $\PSPACE$ machine configuration graph.  
	 Recall we defined $\revPSPACE$ to be the class of languages decidable in polynomial space by a reversible Turing Machine.  Theorem \ref{thm:revpspace} states that $\PSPACE=\revPSPACE$, and indeed a result of Lange, McKenzie and Tapp \cite{lange00} proves that for any space-constructible $s$, ${\sf{DSPACE}[s]}\subseteq{\sf{revSPACE}[s]}$, at a cost of an exponential time blow-up.  In fact, they also show without loss of generality that the starting configuration of the reversible machine has in-degree $0$, as long as the input $x$ is kept on the tape at the end of the computation (and so the accepting configuration depends on $x$).
	
	Thus, an arbitrary $\PSPACE$ language $L$ can be decided by a polynomial space reversible Turing machine $M$, and the resulting configuration graph $G^M$ is a collection of disjoint paths. 
		
As before consider the graph $G^M_x$ obtained by adding an edge from the accepting configuration $t$ to the starting configuration $s_x$, and adding self-loops on all other vertices. 
Now, if $\det(A^M_x) \neq 0$, i.e. if $M$ accepts $x$, there is a maximal path in $G^M$ starting from $s_x$ and ending at $t_x$. Assume the path has $\ell+1$ vertices, where $\ell \in 2^{\poly}$. $G^M_x$ adds an edge from $t_x$ to $s_x$ and adds a self-loop to all other vertices. Therefore if $M$ accepts $x$, $G^M_x$ is a disjoint union of connected graphs $G^M_{x,i}$, where:
\begin{enumerate}
\item If $s_x, t_x$ are not vertices of $G^M_{x,i}$, $G^M_{x,i}$ is a path with additional self-loops on all vertices of the path.
\item If $s_x, t_x$ are vertices of $G^M_{x,i}$, $G^M_{x,i}$ is a cycle, with $s_x$ coming directly after $t_x$ in the cycle, and with additional self-loops on all vertices in the cycle except for $s_x$ and $t_x$.
\end{enumerate}
Let us look at these two cases separately. Assume a subgraph of the first type has $\ell$ vertices (i.e. the path has length $\ell-1$); then its adjacency matrix is, after appropriate relabelling of vertices, the following $\ell \times \ell$ matrix:
\begin{equation}
A_{1,\ell} := 
\begin{bmatrix}
    1 & 0 & 0 & 0 & \dots  & 0  & 0 \\
    1 & 1 & 0 & 0 &\dots  & 0 & 0 \\
    0 & 1 & 1 & 0 & \dots  & 0 & 0 \\
     0 & 0 & 1 & 1 & \dots  & 0 & 0 \\
    \vdots & \vdots & \vdots & \vdots & \ddots & \vdots & \vdots \\
    0 & 0 & 0 & 0 & \dots  & 1 & 0 \\
    0 & 0 & 0 & 0 & \dots  & 1 & 1
\end{bmatrix}.
\end{equation}
Computing $A_{1,\ell}^T A_{1,\ell}$, we see that it is:
\begin{equation} \label{eq:psd_mat}
A_{1,\ell}^T A_{1,\ell} := 
\begin{bmatrix}
    2 & 1 & 0 & 0 & \dots  & 0  & 0 \\
    1 & 2 & 1 & 0 &\dots  & 0 & 0 \\
    0 & 1 & 2 & 1 & \dots  & 0 & 0 \\
     0 & 0 & 1 & 2 & \dots  & 0 & 0 \\
    \vdots & \vdots & \vdots & \vdots & \ddots & \vdots & \vdots \\
    0 & 0 & 0 & 0 & \dots  & 2 & 1 \\
    0 & 0 & 0 & 0 & \dots  & 1 & 1
\end{bmatrix}.
\end{equation}
The characteristic equation for the eigenvalues $\lambda$ is then $p_\ell(\lambda) = 0$, where
\begin{equation} \label{eq:char_p}
p_\ell(\lambda) := 
\det \left(
\begin{bmatrix}
    2 - \lambda & 1 & 0 & 0 & \dots  & 0 & 0  \\
    1 & 2 - \lambda & 1 & 0 &\dots  & 0 & 0 \\
    0 & 1 & 2 - \lambda & 1 & \dots  & 0 & 0 \\
     0 & 0 & 1 & 2 - \lambda & \dots  & 0 & 0 \\
    \vdots & \vdots & \vdots & \vdots & \ddots & \vdots \\
    0 & 0 & 0 & 0 & \dots  & 2 - \lambda & 1 \\
    0 & 0 & 0 & 0 & \dots  & 1 & 1 - \lambda \\
\end{bmatrix} \right).
\end{equation}
Now define the polynomial $q_n(x)$ to be the following $n \times n$ determinant:
\begin{equation}
q_n(x) := 
\det \left(
\begin{bmatrix}
    x & 1 & 0 & \dots  & 0 & 0  \\
    1 & x & 1 &\dots  & 0 & 0 \\
    0 & 1 & x  & \dots  & 0 & 0 \\
    \vdots & \vdots & \vdots & \ddots & \vdots \\
    0 & 0 & 0 & \dots  & x & 1 \\
    0 & 0 & 0 & \dots  & 1 & x \\
\end{bmatrix} \right)
\end{equation}
Note that $p_\ell(\lambda) = q_\ell(2-\lambda) - q_{\ell-1}(2-\lambda)$. Moreover, $q_n(x)$ satisfies the recurrence
\begin{equation}
q_0(x) = 1, \quad q_1(x) = x, \quad q_n(x) = x q_{n-1}(x) - q_{n-2} (x)
\end{equation}
This is the same recurrence satisfied by $U_n(x/2)$, where $U_n(x)$ is the $n$-th degree Chebyshev polynomial of the second kind. Therefore $q_n(x) = U_n(x/2)$. Using $U_n(\cos\theta) = \sin((n+1)\theta)/\sin(\theta)$ we can evaluate:
\begin{align}
q_\ell(2\cos\theta) - q_{\ell-1}(2\cos\theta) &= \frac{\sin((\ell+1)\theta) - \sin(\ell \theta)}{\sin\theta} \\
&= \frac{\sin((2\ell+1)\theta/2)}{\sin (\theta/2)}
\end{align}
and therefore the zeroes of $q_\ell(x) - q_{\ell-1}(x)$ are $2\cos\left(\frac{2k}{2\ell + 1}\pi\right)$, $k = 1,\cdots,\ell$. The zeroes of the polynomial $p_\ell(\lambda) = q_\ell(2-\lambda) - q_{\ell-1}(2-\lambda)$ are then
\begin{equation}
\lambda_k = 2\left(1 - \cos\left(\frac{2k}{2\ell+1}\pi\right)\right)
\end{equation}
and the smallest eigenvalue $\lambda_1 = \Theta(\ell^{-2})$ is inverse exponentially bounded away from zero, because $\ell = 2^{\mathcal{O}(\poly)}$.

We now look at the other case, where the subgraph is of the second type, i.e. it contains $s_x$ and $t_x$. The adjacency matrix of this subgraph is, assuming there are $\ell$ vertices, the $\ell \times \ell $ matrix:
\begin{equation}
A_{2,\ell} := 
\begin{bmatrix}
    0 & 0 & 0 & 0 & \dots  & 0  & 1 \\
    1 & 1 & 0 & 0 &\dots  & 0 & 0 \\
    0 & 1 & 1 & 0 & \dots  & 0 & 0 \\
     0 & 0 & 1 & 1 & \dots  & 0 & 0 \\
    \vdots & \vdots & \vdots & \vdots & \ddots & \vdots & \vdots \\
    0 & 0 & 0 & 0 & \dots  & 1 & 0 \\
    0 & 0 & 0 & 0 & \dots  & 1 & 0
\end{bmatrix}
\end{equation}
We can directly evaluate $A_{2,\ell}^T A_{2,\ell}$, obtaining the following matrix:
\begin{equation}
A_{2,\ell}^T A_{2,\ell} = 
\begin{bmatrix}
    1 & 1 & 0 & 0 & \dots  & 0 & 0  & 0 \\
    1 & 2 & 1 & 0 &\dots  & 0 & 0 & 0 \\
    0 & 1 & 2 & 1 & \dots  & 0 & 0 & 0 \\
     0 & 0 & 1 & 2 & \dots  & 0 & 0 & 0 \\
    \vdots & \vdots & \vdots & \vdots & \ddots & \vdots & \vdots \\
    0 & 0 & 0 & 0 & \dots  & 2 & 1 & 0 \\
    0 & 0 & 0 & 0 & \dots  & 1 & 2 & 0 \\
    0 & 0 & 0 & 0 & \dots  & 0 & 0 & 1
\end{bmatrix}
\end{equation}
For purposes of calculating the smallest eigenvalue the last row and column can be ignored, leaving an $(\ell-1) \times (\ell-1)$ matrix. Looking back at \ref{eq:psd_mat} and \ref{eq:char_p}, we see that the characteristic equation is exactly $p_{\ell-1}(\lambda) = 0$. Therefore once again the smallest eigenvalue is inverse exponentially bounded away from zero.

\end{proof}
\subsection{$\gappedsucc$ is in $\QMAexp$}
In this section, we now proceed to show a $\QMAexp$ protocol for $\gappedsucc$.
\begin{lemma} \label{lem:qmaexp protocol}
Let $A$ be a positive semidefinite, symmetric, and succinctly representable sparse matrix, whose entries are 0, 1, or 2; moreover the smallest eigenvalue of $A$ is promised to be either zero or at least $2^{-g(n)}$ for some polynomial $g(n)$. There is a $\QMAexp$ protocol for deciding which is the case.
\end{lemma}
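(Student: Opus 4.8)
The plan is as follows. Merlin sends Arthur the (purported) zero-eigenvector $\ket{\psi}$ of $A$, and Arthur runs a ``measure a random term of $A$'' verification whose acceptance probability, for any witness $\ket{\psi}$, equals $\tfrac12 - \tfrac1D\bra{\psi}A\ket{\psi}$ for a normalisation constant $D = \poly(n)$. Since the promise guarantees that the minimal ``energy'' $\min_{\ket{\psi}}\bra{\psi}A\ket{\psi}$ is either $0$ (in the YES case, where a unit null vector exists) or at least $2^{-g(n)}$ (in the NO case), this yields a protocol with completeness essentially $\tfrac12$ and soundness at most $\tfrac12 - 2^{-g(n)}/D$. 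The completeness--soundness gap is positive but only exponentially small, which is precisely what a $\QMAexp$ protocol is allowed to have.

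The first step is to decompose $A$ into polynomially many easily measurable pieces. Since $A$ is succinctly representable sparse, the graph on its off-diagonal support has maximum degree $d = \poly(n)$, and from the succinct encoding one can in $\poly(n)$ time list the neighbours of any index together with the corresponding entries. By standard sparse Hamiltonian simulation techniques (Aharonov and Ta-Shma and subsequent work), one can write $A = H_0 + \sum_{k=1}^K H_k$ with $K = \poly(n)$, where $H_0$ is the diagonal part of $A$ and each $H_k$ is symmetric, $1$-sparse, has entries in $\{0,1,2\}$, and is $\poly(n)$-time computable from the encoding (that is, there is a poly-time function returning, for each index $i$, its unique $H_k$-partner $j$, if any, and the entry $(H_k)_{ij}$). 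Each such $H_k$ is a direct sum of $1\times 1$ zero blocks and $2\times 2$ blocks $\left(\begin{smallmatrix}0&\beta\\\beta&0\end{smallmatrix}\right)$ with $\beta\in\{1,2\}$; hence $\|H_k\|\le 2$, its eigenvalues lie in $\{-2,-1,0,1,2\}$, and its eigenvectors $\tfrac1{\sqrt2}(\ket{i}\pm\ket{j})$ are exactly representable. Likewise $\|H_0\|\le 2$ and $H_0$ is diagonal in the computational basis.

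Arthur's verifier then works coherently as follows: prepare a $\lceil\log(K+1)\rceil$-qubit ancilla in the uniform superposition over $t\in\{0,\dots,K\}$; controlled on $t$, compute from the witness register's basis index the data of the relevant block of $H_t$ (partner index, matched/unmatched flag, entry) and rotate the witness into the eigenbasis of $H_t$ (a fixed small rotation on each block -- exact, requiring no approximation); coherently record the corresponding eigenvalue $\mu\in\{-2,\dots,2\}$; using $\mu$, rotate a fresh output qubit to output $1$ with probability $\tfrac12 - \tfrac{\mu}{4(K+1)}$; then uncompute all registers except the output qubit. This is a $\poly(n)$-size circuit on $\poly(n)$ qubits. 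The only irrational gates are the $O(1)$ instance-dependent single-qubit output rotations, each approximable to accuracy $2^{-\ell}$ with $\poly(\ell)$ gates by Solovay--Kitaev; taking $\ell$ a suitable polynomial makes the induced error $\varepsilon$ in the acceptance probability smaller than $2^{-2g(n)}$. Expanding $\ket{\psi}$ in the eigenbasis of each $H_t$, the exact acceptance probability is $\frac{1}{K+1}\sum_{t=0}^K \bra{\psi}\!\left(\tfrac12 I - \tfrac{1}{4(K+1)}H_t\right)\!\ket{\psi} = \tfrac12 - \tfrac{1}{4(K+1)^2}\bra{\psi}A\ket{\psi}$, so with $D := 4(K+1)^2 = \poly(n)$ Arthur accepts $\ket{\psi}$ with probability within $\varepsilon$ of $\tfrac12 - \tfrac1D\bra{\psi}A\ket{\psi}$; note $0\preceq \tfrac12 I - \tfrac{1}{4(K+1)}H_t\preceq I$, so each term is a legitimate two-outcome measurement.

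Putting it together: in the YES case Merlin sends a unit null vector of $A$ and Arthur accepts with probability at least $\tfrac12-\varepsilon$; in the NO case every unit $\ket{\psi}$ has $\bra{\psi}A\ket{\psi}\ge 2^{-g(n)}$, so Arthur accepts with probability at most $\tfrac12 - 2^{-g(n)}/D + \varepsilon$. With $\varepsilon$ chosen (as above) at most $2^{-g(n)}/(4D)$, this is a $\QMA(c,s)$ protocol with $c-s\ge 2^{-g(n)}/(2D) = 2^{-\poly(n)}>0$, i.e.\ a $\QMAexp$ protocol. I expect the main obstacle to be the middle two steps: faithfully turning the succinct, highly non-local description of the exponentially large matrix $A$ into an honest polynomial-size coherent verifier that measures the terms $H_t$, and tracking the approximation errors so that the true but exponentially small quantity $\tfrac1D\bra{\psi}A\ket{\psi}$ is not swamped by noise -- the underlying arithmetic of the $1$-sparse decomposition and of the reflections/rotations is standard but tedious.
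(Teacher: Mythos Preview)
Your proof is correct, but it takes a genuinely different route from the paper's. The paper has Merlin send a purported null vector and then runs a \emph{one-bit phase estimation} (Hadamard test) on $e^{-iAt}$, relying on the polylogarithmic error scaling of the Berry et al.\ sparse Hamiltonian simulation to implement $e^{-iAt}$ to accuracy $2^{-\poly(n)}$ with only $\poly(n)$ gates; the acceptance probability on an eigenvector of eigenvalue $\lambda$ is $1-(1-\cos(\lambda t))/2$, so completeness is $1-\epsilon$ and soundness is roughly $1-2^{-2g(n)}t^2/4$. You instead decompose $A$ into $\poly(n)$ many $1$-sparse pieces and have Arthur measure a random piece, in the spirit of the standard ``Local Hamiltonian $\in\QMA$'' argument, obtaining acceptance probability $\tfrac12-\tfrac1D\bra{\psi}A\ket{\psi}$. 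Your approach is more elementary---it avoids invoking the heavy Hamiltonian-simulation machinery altogether and needs only exact two-dimensional block rotations plus a single Solovay--Kitaev-approximated output rotation. The paper's approach, on the other hand, produces completeness $c=1-\epsilon$ with $\epsilon$ freely tunable, and this specific form is what the paper later plugs into Kempe--Regev to satisfy the inequality $(1-s)/T^3 - (1-c)/(T+1) > 2^{-\poly(n)}$ in the proof that \preciseilh{3}\ is $\PSPACE$-hard; your parameters $c\approx s\approx\tfrac12$ would not feed directly into that step without further work. (A minor inefficiency: your output probability $\tfrac12-\mu/(4(K+1))$ is more conservative than necessary---$\tfrac12-\mu/4$ already lies in $[0,1]$ since $|\mu|\le 2$, which would shrink $D$ by a factor of $K+1$---but this does not affect correctness.)
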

Our strategy will essentially be to simulate the time evolution of the sparse Hamiltonian $e^{-iAt}$ using known simulation methods, and then use a stripped-down version of phase estimation to estimate an eigenvalue of $A$. We first note the following result for sparse matrix simulation:
\begin{theorem}[\cite{berry14}, \cite{berry15}] \label{thm:ham_sim}
Suppose $A$ is a $2^n \times 2^n$ symmetric and succinctly representable sparse matrix, with at most $d$ nonzero entries in each row. Then treated as a Hamiltonian, the time evolution $\exp(-iAt)$ can be simulated using $\text{\emph{poly}}(n,d,\|A\|, t, \log(1/\epsilon))$ operations.
\end{theorem}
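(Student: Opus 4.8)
The plan is to follow the truncated-Taylor-series / linear-combination-of-unitaries (LCU) construction of \cite{berry14,berry15}. First I would convert the succinct encoding into the standard sparse-access model: since the encoding circuit outputs, on a given row index $j$, the (at most $d$) nonzero entries of row $j$ in $\poly(n)$ time, I can build reversible circuits implementing the oracle $O_{\mathrm{loc}}:|j,\ell\rangle\mapsto|j,\nu(j,\ell)\rangle$ (the column of the $\ell$-th nonzero entry of row $j$) and the oracle $O_{\mathrm{val}}:|j,k,0\rangle\mapsto|j,k,A_{jk}\rangle$, each call costing $\poly(n)$ elementary gates. Using a deterministic edge-coloring of the nonzero-pattern graph of $A$ (Aharonov--Ta-Shma / Childs), and exploiting that $A=A^\dagger$, I decompose $A=\sum_{m=1}^{M}H_m$ with $M=O(d^2)$, each $H_m$ $1$-sparse and Hermitian with $\|H_m\|\le\|A\|_{\max}\le\|A\|$, whose own sparse-access oracle is computable with $O(1)$ calls to $O_{\mathrm{loc}},O_{\mathrm{val}}$. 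A $1$-sparse Hermitian matrix is a direct sum of $2\times2$ and $1\times1$ blocks, so $U_m:=\tfrac{1}{\|A\|_{\max}}H_m+i\sqrt{I-(H_m/\|A\|_{\max})^2}$ is unitary and, because $\sqrt{I-(H_m/\|A\|_{\max})^2}$ is again block-diagonal with blockwise-computable entries, $U_m$ is implementable with $O(1)$ oracle calls plus $\poly(n)$ gates; then $H_m=\|A\|_{\max}\cdot\tfrac12(U_m+U_m^\dagger)$. This exhibits $A=\sum_{l}\alpha_l W_l$ as a linear combination of $2M$ unitaries $W_l\in\{U_m,U_m^\dagger\}$ with all $\alpha_l>0$ and $\alpha:=\sum_l\alpha_l=O(d^2\|A\|)$.

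Next I would split the evolution time into $r=\lceil\alpha t/\ln 2\rceil=O(d^2\|A\|t)$ equal segments, so that in each segment the LCU weight is $\alpha t/r=\ln 2$. Within a segment I approximate $e^{-iA(t/r)}$ by its Taylor series truncated at order $K$; expanding each power of $A$ via the LCU turns this into $\sum_{k\le K}\sum_{l_1,\dots,l_k}\frac{(-it/r)^k}{k!}\alpha_{l_1}\cdots\alpha_{l_k}W_{l_1}\cdots W_{l_k}$, a linear combination of products of at most $K$ of the $W_l$ whose coefficient $\ell_1$-norm is $\le\sum_{k\le K}(\ln 2)^k/k!<2$. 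Standard estimates show $K=O\!\left(\frac{\log(r/\epsilon)}{\log\log(r/\epsilon)}\right)$ suffices for per-segment error $\epsilon/r$. One round of oblivious amplitude amplification (after padding the coefficient vector so its $\ell_1$-norm is exactly $2$, making the block-encoding amplitude $1/2$) then implements each segment deterministically up to error $O(\epsilon/r)$, using $O(1)$ invocations of the operators $\mathrm{SELECT}=\sum_{l_1\dots l_K}|l_1\dots l_K\rangle\langle l_1\dots l_K|\otimes W_{l_1}\cdots W_{l_K}$ and $\mathrm{PREPARE}$ on an $O(K\log M)$-qubit control register. $\mathrm{SELECT}$ costs $K$ controlled-$W_l$'s, i.e. $\poly(n,d,K)$ gates including oracle calls, and $\mathrm{PREPARE}$ costs $\poly$ in its register size; composing the $r$ segments and summing errors gives total error $\le\epsilon$. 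Tallying, the gate count is $r\cdot O(1)\cdot(\text{cost of SELECT}+\text{cost of PREPARE})=O(d^2\|A\|t)\cdot\poly(n,d,K,\log M)=\poly(n,d,\|A\|,t,\log(1/\epsilon))$, as claimed.

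The main obstacle is the combination of the LCU decomposition and the oblivious amplitude amplification: producing a Hermitian decomposition of $A$ into few $1$-sparse unitaries with a tightly controlled weight $\alpha$ requires the deterministic edge-coloring argument together with careful handling of the diagonal part, and then verifying that a single round of the \emph{robust} form of oblivious amplitude amplification per segment tolerates the $O(\epsilon/r)$ imperfection of the truncated series — and that the relevant reflection acts as intended on the amplified subspace — is the genuinely delicate part. By contrast, bounding the Taylor truncation error, propagating errors through the $r$ segments, and converting the abstract query complexity into $\poly(n)$ gate complexity via the succinct encoding are routine and need only be spelled out.
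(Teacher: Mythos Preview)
The paper does not actually prove this statement: Theorem~\ref{thm:ham_sim} is quoted as a black-box result from \cite{berry14,berry15} and used without further argument in the proof of Lemma~\ref{lem:qmaexp protocol}. Your proposal is therefore strictly more than what the paper does, and what you outline is precisely the truncated-Taylor-series/LCU construction of those references, so there is nothing to compare against.
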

The crucial thing to notice in Theorem \ref{thm:ham_sim} is the polylogarithmic scaling in the error $\epsilon$; this implies that we can obtain exponential precision in $\exp(-iAt)$ using only polynomially many operations. Also note that we can upper bound $\|A\|$ with the following observation:
\begin{remark}
Suppose a matrix $A$ has at most $d$ nonzero entries per row, each of which is no more than $k$ in absolute value. Then $\| A \| \le kd$.
\end{remark}
\begin{proof}[Proof of Lemma \ref{lem:qmaexp protocol}]
We are given a succinct encoding of an symmetric PSD $d$-sparse matrix $A$, and it is promised that the smallest eigenvalue $\lambda_{min}$ of $A$ is either zero or at least $2^{-g(n)}$ for some polynomial $g(n)$. Merlin would like to convince us that $\lambda_{min} = 0$; he will send us a purported eigenstate $\ket{\psi}$ of $A$ with zero eigenvalue. We will carry out a stripped-down version of phase estimation on $\exp(-iAt)$ acting on $\ket{\psi}$ to decide, with exponentially small completeness-soundness gap, whether Merlin is telling the truth. Let us choose $t = \pi / (kd) \le \pi / \|A\|$; then all eigenvalues of $At$ lie in the range $[0,\pi]$, and the output of phase estimation will be unambiguous.

To implement phase estimation, we first need to be able to implement $\exp(-iAt)$ efficiently and to high precision. This is what Theorem \ref{thm:ham_sim} gives us: we can implement $\exp(-iAt)$ up to error $\epsilon = 2^{-\poly(n)}$ using only a polynomial number of operations, for any choice of $\epsilon$.

Now to use phase estimation to distinguish the phase up to exponential precision, we would normally require exponentially many operations in the usual phase estimation routine. Instead, we will simply do phase estimation with one bit:
\begin{align}
&\Qcircuit @C=1em @R=.7em {
\lstick{\ket{0}}& \gate{H} & \ctrl{1} & \gate{H} & \rstick{\frac{1+e^{-i\lambda t}}{2}\ket{0} +\frac{1-e^{-i\lambda t}}{2}\ket{1} } \qw \\
\lstick{\ket{\psi}}& \qw & \gate{e^{-iAt}}  & \qw & \rstick{\ket{\psi}} \qw
}
\end{align}
In the above we've assumed $\ket{\psi}$ is an eigenstate of $A$ with eigenvalue $\lambda$. If we measure the control qubit at the end, we see the probability we obtain 0 is $1 - (1-\cos(\lambda t))/2 = 1 - (\lambda t)^2/4 + \mathcal{O}(\lambda^4t^4)$. Therefore if $\psi$ is a zero eigenstate, we can verify this with probability at least $1 - \epsilon$, where recall $\epsilon$ is the error in the implementation of $\exp(-iAt)$. Otherwise if $\lambda_{min} \ge 2^{-g(n)}$, no state $\psi$ will be accepted with probability greater than $1 -  2^{-2g(n)}t^2/4 + \epsilon + \mathcal{O}(2^{-4g(n)}t^4)$. The separation between the completeness and soundness probabilities is exponentially small if we pick $\epsilon \le 2^{-2g(n)}t^2/16$, and this therefore gives us a $\QMAexp$ protocol.
\end{proof}
\begin{theorem}
$\PSPACE\subseteq\QMAexp$.
\end{theorem}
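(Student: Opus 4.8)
The plan is to chain together the two results just established. Let $L=(L_{yes},L_{no})$ be an arbitrary promise problem in $\PSPACE$. By Theorem~\ref{thm:gappedsucc}, $\gappedsucc$ is $\PSPACE$-hard, so there is a polynomial-time many-one reduction $f$ from $L$ to $\gappedsucc$; concretely, $f$ is the map obtained from the configuration-graph construction in the proof of Theorem~\ref{thm:gappedsucc}, applied to a reversible polynomial-space machine $M$ deciding $\overline{L}$ (such an $M$ exists because $\overline{L}\in\PSPACE=\revPSPACE$ by Theorem~\ref{thm:revpspace}). On input $x$ this map builds the graph $G^M_x$ and outputs the succinct encoding of $(A^M_x)^T A^M_x$. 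The first point I would check is that $f$ always respects the promise of $\gappedsucc$: the output matrix is symmetric and positive semidefinite by construction, its entries lie in $\{0,1,2\}$ since each column of $A^M_x$ has at most two $1$s, and the spectral analysis in the proof of Theorem~\ref{thm:gappedsucc} shows its smallest eigenvalue is either $0$ (when $M$ rejects $x$) or at least $2^{-g(n)}$ (when $M$ accepts $x$). Consequently $x\in L_{yes}\Leftrightarrow f(x)$ is a yes-instance of $\gappedsucc$ and $x\in L_{no}\Leftrightarrow f(x)$ is a no-instance.

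Next I would invoke closure of $\QMAexp$ under polynomial-time many-one reductions. By Lemma~\ref{lem:qmaexp protocol} there is a $\QMAexp$ verification procedure $W$ for $\gappedsucc$. The verifier for $L$ on input $x$ first computes $f(x)$ (in polynomial time, hence it can be hard-wired into the uniform circuit $V_x$), then runs $W$ on $f(x)$, using Merlin's message as the witness. Since $f$ is polynomial-time computable and $W$ has polynomial circuit size, polynomially many witness and ancilla qubits, and completeness--soundness gap $\exp(-\poly)$, the composed protocol $V_x$ inherits all these properties --- in particular the gap remains inverse-exponential. Hence $L\in\QMAexp$, and as $L$ was arbitrary, $\PSPACE\subseteq\QMAexp$. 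Together with $\QMAexp\subseteq\PSPACE$ from the previous section, this also completes the proof of Theorem~\ref{thm:main}.

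There is essentially no remaining obstacle: all of the work is already packaged into Theorem~\ref{thm:gappedsucc} and Lemma~\ref{lem:qmaexp protocol}. The only two points needing a line of care are (i) that the hardness reduction genuinely lands inside the promise of $\gappedsucc$ --- this is precisely why one needs the reversible-machine version Theorem~\ref{thm:gappedsucc} rather than only the determinant version Theorem~\ref{thm:succdet}, since a generic succinct PSD matrix could have a doubly-exponentially small smallest eigenvalue --- and (ii) the routine observation that prepending a classical polynomial-time computation to the $\QMAexp$ verifier preserves polynomial circuit size and does not shrink the gap below $\exp(-\poly)$.
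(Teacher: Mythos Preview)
Your proof is correct and follows the same approach as the paper: the paper's own proof is the one-line ``Follows from Theorem~\ref{thm:gappedsucc} and Lemma~\ref{lem:qmaexp protocol},'' and you have simply unpacked that chaining in more detail, including the (correct) observation that the reduction from $L$ requires running the configuration-graph construction on a machine for $\overline{L}$ so that YES instances map to the zero-eigenvalue case. There is nothing substantively different from the paper's argument.
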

\begin{proof}
Follows from Theorem \ref{thm:gappedsucc} and Lemma \ref{lem:qmaexp protocol}.
\end{proof}
This finishes the proof of our main theorem:
\thmmain*

\section{A $\PSPACE$-complete variant of the Local Hamiltonian problem}
The classic $\QMA$-complete problem is the Local Hamiltonian problem: given a local Hamiltonian $H$, and parameters $a < b$ with $b-a > 1/\poly$, it is promised that the smallest eigenvalue of $H$ is either at most $a$ or at least $b$; decide which is the case. We now show that if we weaken the promise gap from polynomially small to only exponentially small, then this problem becomes $\PSPACE$-complete.
\begin{definition}[\preciseklh]
Given as input is a $k$-local Hamiltonian $H=\sum_{j=1}^rH_j$ acting on $n$ qubits, satisfying $r \in \poly(n)$ and $\|H_j\| \le \poly(n)$, and numbers $a < b$ satisfying $b - a > 2^{-\poly(n)}$. It is promised that the smallest eigenvalue of $H$ is either at most $a$ or at least $b$. Output 1 if the smallest eigenvalue of $H$ is at most $a$, and output 0 otherwise.
\end{definition}

\begingroup
\def\thetheorem{\ref{thm:lh}}
\begin{theorem}
For any $3 \le k \le \mathcal{O}(\log(n))$, \preciseklh \ is $\QMAexp$-complete, and hence $\PSPACE$-complete.
\end{theorem}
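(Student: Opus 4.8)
The plan is to show that \preciseklh is in $\QMAexp$ and is $\QMAexp$-hard, and then upgrade $\QMAexp$-completeness to $\PSPACE$-completeness via Theorem~\ref{thm:main}. The two-sided bound on $k$ enters the two directions asymmetrically: containment uses $k\le\mathcal{O}(\log n)$, so that the Hamiltonian is polynomially sparse, while hardness uses $k\ge3$, so that an \emph{exact} (gadget-free) circuit-to-Hamiltonian construction is available.

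\emph{Containment.} A $k$-local $H=\sum_{j=1}^{r}H_j$ with $r\le\poly(n)$, $\|H_j\|\le\poly(n)$ and $k\le\mathcal{O}(\log n)$ is a succinctly representable sparse matrix: each term contributes at most $2^k=\poly(n)$ nonzero entries per row, so $H$ has at most $r\cdot2^k=\poly(n)$ nonzeros per row, listable in $\poly(n)$ time, and $\|H\|\le\sum_j\|H_j\|\le\poly(n)$. I would then essentially run the protocol of Lemma~\ref{lem:qmaexp protocol}: shift and rescale to $\tilde H=(H+cI)/N$ with $c,N\le\poly(n)$ chosen so that the images $\tilde a<\tilde b$ of the thresholds satisfy $0\le\tilde a<\tilde b\le\pi/2$, $\tilde a\ge2^{-\poly(n)}$, and $\|\tilde H\|\le\pi$. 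Merlin sends the ground state of $H$; Arthur implements $e^{-i\tilde H}$ to error $2^{-\poly(n)}$ using $\poly(n)$ gates via Theorem~\ref{thm:ham_sim} and runs the one-bit phase-estimation circuit from the proof of Lemma~\ref{lem:qmaexp protocol}. Since $\cos^2(\lambda/2)$ is monotone on $[0,\pi]$, the ground state is an optimal witness and the yes/no acceptance probabilities are $\cos^2(\tilde\lambda_{\min}/2)$ with $\tilde\lambda_{\min}\le\tilde a$ resp.\ $\ge\tilde b$; the identity $\cos^2(\tilde a/2)-\cos^2(\tilde b/2)=\sin((\tilde a+\tilde b)/2)\sin((\tilde b-\tilde a)/2)$ then yields a completeness-soundness separation of at least $2^{-\poly(n)}$, a legitimate $\QMAexp$ gap.

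\emph{Hardness.} The key point is that one should \emph{not} feed a generic $\QMAexp$ verifier into Kitaev's construction: there the yes-instance ground energy sits at height $\Theta((1-c)/T)$ and the no-instance ground energy at height $\Theta((1-\sqrt{s})/\poly(T))$, so a positive promise gap only appears when the completeness $c$ is exponentially close to $1$---and this cannot be arranged by amplification without blowing the circuit size $T$ up to $2^{\poly}$. Instead I would re-use the lower-bound proof: given $L\in\PSPACE=\QMAexp$, reduce it via Theorem~\ref{thm:gappedsucc} to a \gappedsucc instance and take the verifier $V$ produced in Lemma~\ref{lem:qmaexp protocol}---a circuit of size $T=\poly(n)$ with completeness $\ge1-\epsilon$ for any $\epsilon=2^{-\poly(n)}$ we wish to fix (the Hamiltonian-simulation error costs only $\poly(\log(1/\epsilon))$ gates) and soundness $\le1-2^{-\poly(n)}$. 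Applying the history-state construction of \cite{ksv02} to $V$ yields a $5$-local Hamiltonian $H_{\mathrm{circ}}$ on $\poly(n)$ qubits with $\poly(n)$ terms of $O(1)$ norm whose spectral analysis---which uses only the geometric lemma, not perturbation theory---gives ground energy $\le\epsilon/(T+1)$ in the yes case and $\ge b_0$ in the no case, where $b_0=\Omega((1-\sqrt{s})/\poly(T))=2^{-\poly(n)}$ is an explicit poly-bit rational. Fixing $\epsilon$ below $b_0(T+1)/2$ gives rational thresholds $a=\epsilon/(T+1)$ and $b=b_0$ with $b-a\ge2^{-\poly(n)}$: a \preciseilh{5} instance, hence a \preciseklh instance for every $k\ge5$. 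For $k\in\{3,4\}$ I would then pass $H_{\mathrm{circ}}$ through the exact reduction to $3$-locality of \cite{kr03} (or the $3$-local construction in \cite{ksv02}).

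\emph{Main obstacle.} The single delicate point is this last locality reduction. It must keep term norms $\poly(n)$-bounded while degrading the promise gap only by a $\poly(n)$ factor, so that a $2^{-\poly}$ gap survives; for $k\ge3$ this is fine precisely because the relevant constructions are \emph{exact}---the history states literally span the low-energy subspace and the spectral gap is controlled combinatorially---rather than perturbative. By contrast, the gadgets needed to reach $k=2$ use coefficients that scale inversely in the target approximation error, so forcing their error below a $2^{-\poly}$ gap would require super-polynomial term norms; this is exactly why the $k=2$ case is left open. Everything else is bookkeeping on polynomial factors.
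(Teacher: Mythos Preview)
Your proposal is correct, and for hardness it is essentially the paper's approach: the crucial insight that a generic $\QMAexp$ verifier will not do, and that one must instead feed the circuit-to-Hamiltonian construction the specific verifier of Lemma~\ref{lem:qmaexp protocol}---whose completeness can be pushed to $1-\epsilon$ for any $\epsilon=2^{-\poly}$ at only $\poly\log(1/\epsilon)$ gate cost---is exactly what the paper exploits. The paper streamlines slightly by applying the 3-local construction of \cite{kr03} directly to the verifier circuit (quoting the bounds $(1-c)/(T+1)$ and $(1-s)/T^3$) rather than detouring through $5$-local; note that \cite{kr03} is itself a direct circuit-to-$3$-local construction, not a $5$-to-$3$ reduction as your last paragraph suggests. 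For containment the paper takes a more elementary route than yours: it simply invokes the standard \cite{ksv02} argument (measure a random rescaled term as a POVM), observing that an energy gap $b-a\ge 2^{-\poly}$ divided by a polynomial normalization still yields an inverse-exponential completeness--soundness separation, so no Hamiltonian simulation is needed. Your phase-estimation route also works (and the side condition $\tilde a\ge 2^{-\poly}$ is unnecessary, since $\tilde a+\tilde b\ge\tilde b-\tilde a$ already lower-bounds the first sine factor), but it imports heavier machinery than the problem requires.
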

\addtocounter{theorem}{-1}
\endgroup

\begin{proof}
This proof follows straightforwardly by adapting the proof of \cite{ksv02} and \cite{kr03}. The proof of containment in $\QMAexp$ is identical to the containment of the usual Local Hamiltonian problem in $\QMA$; see \cite{ksv02} for details.

To show $\QMAexp$-hardness, we note that for a $\QMA$-verification procedure with $T$ gates, completeness $c$ and soundness $s$, \cite{kr03} reduces this to a 3-local Hamiltonian with lowest eigenvalue no more than $(1-c) / (T+1)$ in the YES case, or no less than $(1-s) / T^3$ in the NO case. For this to specify a valid \preciselh \ problem we need that
\begin{equation} \label{eq:preciselh_condition}
\frac{1-s}{T^3} - \frac{1-c}{T+1} > 2^{-\poly(n)}.
\end{equation}
Fortunately, there are indeed values of $c$ and $s$ that satisfy the above inequality and can still specify $\QMAexp$-hard problems. To see this, we recall the proof of Lemma \ref{lem:qmaexp protocol}: there it was shown that any problem in $\PSPACE$ can be reduced to a $\QMAexp$ problem with soundness and completeness
\begin{equation}
1-c = \epsilon,\quad 1 - s = -\epsilon + 2^{-g'(n)}
\end{equation}
for some polynomial $g'(n)$ depending on the problem, and any $\epsilon = 2^{-\poly(n)}$ of our choice. The number of operations for that protocol is upper bounded by $T \le h(n,\log (1/\epsilon))$ for some polynomial $h(x,y)$. Now we can pick $\epsilon$ to be a small enough inverse exponential function such that
\begin{equation}
\epsilon (T^2+1) \le \epsilon [h^2(n,\log(1/\epsilon)) +1] < 2^{-g'(n)} 
\end{equation}
holds; this then implies the inequality \ref{eq:preciselh_condition}. Hence any problem in $\PSPACE$ can be reduced to a \preciseilh{3} problem.
\end{proof}
\section{Acknowledgements}
We are grateful to Sevag Gharibian and Martin Schwarz for helpful conversations, and to John Watrous for comments on a preliminary draft. This work was supported by the Department of Defense.
\bibliography{qmaexp-bib}
\bibliographystyle{plain}

\appendix

\section{Proof sketch of $\PQP^{O_{PEPS}}_{\parallel,\text{classical}} = \PP$} \label{app:peps}
Since $\PP \subseteq \BQP^{O_{PEPS}}_{\parallel,\text{classical}} \subseteq \PQP^{O_{PEPS}}_{\parallel,\text{classical}}$ \cite{swv07}, we only need to show that $\PQP^{O_{PEPS}}_{\parallel,\text{classical}} \subseteq \PP$. In \cite{swv07} it was noted that all PEPS can be seen as the output of a quantum circuit followed by a postselected measurement. Therefore $\PQP^{O_{PEPS}}_{\parallel,\text{classical}}$ corresponds to the problems that can be decided by a quantum circuit, followed by a postselected measurement (since the queries to $O_{PEPS}$ are classical and nonadaptive, we can compose them into one single postselection), followed by a measurement. In the YES case the measurement outputs 1 with probability at least $c$, whereas in the NO case the measurement outputs 1 with probability at most $s$, with $c > s$. The standard counting argument placing $\BQP$ inside $\PP$ then applies to this case as well; see for instance \cite[Propositions~2~and~3]{aaronson05}.

\end{document}